\documentclass[journal,10pt]{IEEEtran}
\usepackage[T1]{fontenc}
\usepackage{amsmath,amssymb,amsthm}
\usepackage{graphicx,cite}
\usepackage[hidelinks]{hyperref}
\usepackage{microtype}
\newtheorem{theorem}{Theorem}
\newtheorem{lemma}[theorem]{Lemma}
\newtheorem{proposition}[theorem]{Proposition}
\newtheorem{corollary}[theorem]{Corollary}
\newcommand{\E}{\mathbb{E}}
\newcommand{\Prob}{\mathbb{P}}
\newcommand{\R}{\mathbb{R}}
\newcommand{\Z}{\mathbb{Z}}
\newcommand{\N}{\mathbb{N}}
\newcommand{\ind}{\mathbf{1}}
\newcommand{\dist}{\operatorname{dist}}
\newcommand{\diam}{\operatorname{diam}}

\title{Limit Theorems and Optimal Policies for Flock\\ Flight through Poisson Forests}
\author{Lav R. Varshney\thanks{L. R. Varshney is with the AI Innovation Institute and the Department of Electrical and Computer Engineering, Stony Brook University, Stony Brook, NY, USA.}}
\begin{document}
\maketitle
\begin{abstract}
Inspired by birds flying through cluttered environments, Karaman and Frazzoli showed that high-speed flight through an ergodic random forest exhibits an ultimate speed limit: above a critical speed, almost surely no infinite collision-free trajectory exists for a single bird. This paper formulates a corresponding bulk-transport problem for many birds. We construct an explicit directed corridor network in a Poisson disk forest and impose a common temporal headway at its waypoints. The maximum current equals a minimum vertex-cut capacity and is achieved by repeated flight along vertex-disjoint paths. The same schedule maximizes completed traversals at every finite horizon, with constant forward speed and no internal waiting. For a forest belt of fixed depth, exact transverse superadditivity yields an almost-sure deterministic current-density limit. A uniform shape theorem for an asymmetric planar dual identifies the current when depth and width grow together. For diverging depth bounded by a constant times width, this current equals the width-first, depth-second limit and is independent of the positive aspect ratio. When diagonal corridors are available, bulk current is positive at sufficiently low obstacle intensity and zero at sufficiently high intensity. For stochastic demand, an upstream admission queue is stable at every rate below the finite-network capacity.
\end{abstract}

\section{Introduction}
The problem of high-speed flight through clutter is an example of how geometry, dynamics, and uncertainty impose fundamental performance limits. Karaman and Frazzoli considered a bird or bird-like vehicle flying through a randomly generated planar forest and proved a phase transition result: for an ergodic forest, there is a critical speed above which every online planning policy eventually has tree collisions almost surely, whereas below the critical speed an infinite-length collision-free trajectory exists almost surely under suitable conditions~\cite{karaman2012}. In the homogeneous Poisson disk case, their construction connects the motion-planning problem to percolation theory.

That result is a single-particle theorem. It asks whether at least one infinite-length feasible trajectory exists. A different question arises for flock motion, swarms, or high-density traffic through clutter: how many agents can traverse the forest per unit time? This is not determined by the single-bird speed limit directly. One feasible path can be reused, but a finite body size and a separation requirement limit the rate of reuse. At the same time, a random medium may contain many parallel corridors and enable extensive current flow. The bulk problem is therefore a transport problem through a random medium.

The formulation directly draws on Karaman and Frazzoli in that the forest geometry and vehicle dynamics induce a directed feasibility structure. Methodologically, however, we build on transport through bulk materials, considering a steady current constrained by void geometry, analogous in motivation to transport through porous media~\cite{bear1972,chayes1986}. Maximal-flow theory identifies bottlenecks through max-flow/min-cut duality~\cite{ford1956,elias1956}. Laws of large numbers for maximal flows in random media provide related asymptotic results~\cite{kesten1987,rossignol2010tilted,rossignol2010box}. Note that geometric max-flow methods have also been used to determine navigation lanes with clearance~\cite{farias2020}.

More specifically, the formulation here considers a homogeneous Poisson disk forest and an explicit library of noncrossing forward segments together with a common waypoint headway. This gives a direct connection between a discrete cut and a physical schedule. Uniform waypoint capacities imply that an integral maximum flow decomposes into vertex-disjoint paths. The embedding separates different paths in space whereas a common headway separates consecutive birds on the same path in time. Thus the finite-network capacity is achieved without stopping inside the forest. Since every route has the same transit time, this schedule also maximizes cumulative deliveries at every horizon. The repeated-path construction is related to classical dynamic network flow~\cite{ford1958}, with a separate geometric argument providing the collision-avoidance property. The resulting current is optimal within the prescribed corridor-and-headway protocol and provides an achievable lower bound for unrestricted continuous flight.

We first consider a forest belt of fixed longitudinal depth $L$ and increasing transverse width $W$. This models passage through a finite obstacle field rather than indefinite navigation. A finite belt may support positive current even at a speed for which no infinite trajectory exists. For that setting, we identify a deterministic current per unit width, prove that it is achievable by collision-free forward flight, and give an admission policy for stochastic demand.

In the asymptotic regime, optimal path families in adjacent transverse bands can be combined, giving exact superadditivity. Kingman's ergodic theorem~\cite{kingman1973,steele1989} then gives a deterministic limit. A reverse inequality bounds the benefit of crossing a band interface by the number of waypoints along that interface. The Poisson void probability gives a positive lower bound, and bounded differences control fluctuations around the mean. Under the stated arrival-moment and dispatch assumptions, an upstream admission policy is stable at every rate below the finite-network capacity~\cite{neely2010}.

A second scaling question arises when depth and width grow together, getting to the Karaman--Frazzoli requirement of infinite-length paths without crashing. We address this setting without changing the flight protocol. Zero probability of an infinite open corridor path forces zero normalized current whenever depth and width both diverge. A bounded planar representation of the waypoint capacities identifies cuts with top-to-bottom paths in an asymmetric dual. Deterministic routing bounds and ergodicity yield a shape theorem uniform over moving endpoints. Reflection symmetry makes the vertical direction minimize the limiting cut cost. A finite-block confinement argument then identifies this cost with the width-first, depth-second current. Depth monotonicity extends the identification to every diverging depth sequence of order at most the width. In particular, the bulk current is independent of the positive aspect ratio.

The common limit is strictly positive for sufficiently small obstacle intensity when diagonal corridors are available. The proof uses a diagonal subnetwork and counts dual cuts with many blocked corridors. A finite coloring of their clearance regions accounts for dependence. At sufficiently large intensity, a forward-path count proves extinction. Thus extensive current has a nontrivial intensity threshold within the prescribed protocol. The order of spatial limits remains important: at each fixed finite width, a sufficiently deep forest is eventually blocked almost surely.

The remainder of the paper is organized as follows. Sections~\ref{sec:model}--\ref{sec:finite} specify the protocol and prove its finite-network capacity. Sections~\ref{sec:transverse}--\ref{sec:depth} establish the transverse law and its depth and speed dependence. Admission control follows in Section~\ref{sec:queue}. Section~\ref{sec:bulk} identifies the joint-growth limit, and Section~\ref{sec:intensity} gives positive-current and extinction regimes.

\section{Forest, Dynamics, and Corridor Protocol}\label{sec:model}
Let us introduce the basic mathematical model.
\subsection{Poisson Forest and Forward Flight}
Let $\{Y_k\}$ be a homogeneous Poisson point process in $\R^2$ with intensity $\zeta\in(0,\infty)$, and let every tree have radius $r>0$. The obstacle set is
\begin{equation}
\mathcal O=\bigcup_k B(Y_k,r).\label{eq:obstacles}
\end{equation}
The standard Poisson properties used in the sequel are stationarity, independence on disjoint regions, and the void probability $\Prob\{\#(\{Y_k\}\cap A)=0\}=e^{-\zeta|A|}$ for bounded Borel sets $A$~\cite{last2017}. In particular, translation by any fixed nonzero vector is mixing and ergodic: events determined by bounded regions become independent after sufficiently distant translations, and approximation by such events extends mixing to the full process.

A bird is a disk of radius $a>0$. Fix a clearance margin $\eta>0$ and let $q=a+\eta$. The center follows an absolutely continuous trajectory in two-dimensional state space $(x_1,x_2)$, with
\begin{equation}
\dot x_1(t)=\nu,\quad |\dot x_2(t)|\le u,\quad \nu,u>0,\label{eq:dynamics}
\end{equation}
almost everywhere throughout passage inside the forest. This convention permits the piecewise-linear trajectories used below, whose lateral velocity changes at waypoints.

For a constant $q$, choose pitches $d,b>0$ and a temporal headway $H>0$ satisfying
\begin{equation}
\min\{d,b\}>2q,\quad \nu H>2q.\label{eq:spacing}
\end{equation}
The first inequality provides spatial separation between different lanes whereas the second provides longitudinal separation between successive birds on one lane.

\subsection{Explicit Forward Corridor Library}
The deterministic waypoint set is
\begin{equation}
z_{i,j}=\bigl(id,b(j+\tfrac12)\bigr),\quad i,j\in\Z,\quad i+j\text{ even}.\label{eq:waypoints}
\end{equation}
At even $i$, include the straight arc
\begin{equation}
(i,j)\longrightarrow(i+2,j).\label{eq:straight}
\end{equation}
At every $i$, also include the diagonal arcs
\begin{equation}
(i,j)\longrightarrow(i+1,j\pm1)\label{eq:diagonal}
\end{equation}
when
\begin{equation}
\nu b/d\le u.\label{eq:availability}
\end{equation}
Indices denote their associated waypoints in~\eqref{eq:waypoints}. Each arc is embedded as its straight line segment. Straight arcs advance by $2d$ with no lateral speed. Diagonal arcs advance by $d$ and require lateral speed $\nu b/d$. The graph is acyclic and embedded without crossings except at common endpoints. Every directed path from $x_1=0$ to $x_1=2Nd$ visits each intervening even longitudinal layer. The restriction of straight arcs to even layers ensures this last property.

For an embedded arc $e$, let $[e]$ denote its closed centerline segment. It is open when
\begin{equation}
\dist(Y_k,[e])>R\text{ for every }k,\quad R=r+q.\label{eq:open}
\end{equation}
Thus the entire swept tube of radius $q$ avoids the forest. Openness is defined in the full plane before any band is selected. Restricting the network to a smaller band does not change an arc's status. Note that neighboring openness variables need not be independent, since their clearance regions overlap. Fig.~\ref{fig:forest} illustrates the resulting corridor structure.

\begin{figure}[t]
\centering
\includegraphics[width=\columnwidth]{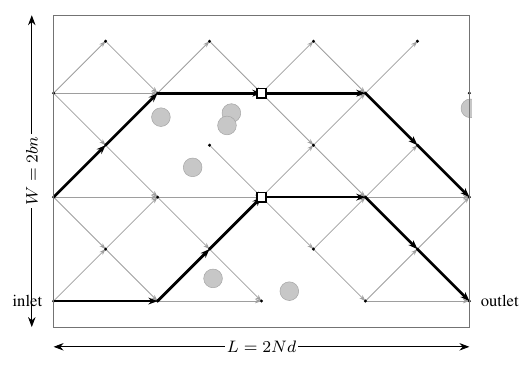}
\caption{An illustrative Poisson forest and its open corridor graph. Gray disks are trees; light solid segments are open corridors. Heavy paths form a maximum vertex-disjoint family, and squares mark a minimum waypoint cut. Blocked corridors are omitted. The computation gives $K=2$, so repeated launches attain current $2/H$. The square symbols identify a capacity obstruction.}
\label{fig:forest}
\end{figure}

Fix $N\in\N$ and set $L=2Nd$. For integers $m<n$, retain the open arcs and waypoints with
\begin{equation}
0\le i\le 2N,\quad 2m\le j<2n.\label{eq:band}
\end{equation}
Call this finite graph $G_N[m,n]$. Its centerlines lie in the transverse band $[2bm,2bn]$, whose width is $2b(n-m)$. The half-row offset in~\eqref{eq:waypoints} leaves at least $b/2>q$ of lateral clearance from the band boundary. No corridor leaves through the upper or lower side. The inlet and outlet sets are the waypoints at $i=0$ and $i=2N$: there are $n-m$ of each.

\subsection{Admissible Traffic}
To develop a finite bulk-current limit, we need a capacity mechanism in addition to single-bird feasibility. If birds were point particles and did not have mutual exclusion, they could follow each other such that one admissible trajectory is reused at arbitrarily small time offsets. Finite body size, clearance, and the following waypoint rule give the current a definite operational meaning. Birds are admitted at inlet waypoints, traverse open graph paths, and leave at the outlet. They may not wait inside the forest, leave the corridor library, or collide with one another. At every waypoint, including the inlet and outlet, successive passage times of different birds must differ by at least $H$. Together with the constant forward speed and prescribed corridor spacing, this common waypoint rule enforces body exclusion. Full map information and dispatchable access to the inlet waypoints are available. Any waiting for admission takes place outside the forest.

Let $D_\pi(T)$ be the number of birds delivered by time $T$ under an admissible execution $\pi$, starting with an empty forest. Define the maximum sustainable current by
\begin{equation}
J_N[m,n]=\sup_\pi\liminf_{T\to\infty}\frac{D_\pi(T)}{T}.\label{eq:current}
\end{equation}
We will bound cumulative deliveries at every horizon and give a single schedule attaining all these bounds.

Note that this formulation has a prescribed corridor-and-headway structure. Its optimum is not asserted to be extremal over all continuous collision-free trajectories. Every schedule given in the sequel is, however, a valid continuous flight under~\eqref{eq:dynamics}, so its current is also an achievable lower bound for the unrestricted motion problem with the same obstacles and bird size.

\section{Exact Finite-Network Capacity and Flight Schedule}\label{sec:finite}
\subsection{Vertex Cuts and Integral Flows}
Let $K_N[m,n]$ be the maximum number of pairwise vertex-disjoint inlet-to-outlet paths in $G_N[m,n]$. Endpoints are included in the disjointness requirement. To compute $K_N[m,n]$, split every waypoint $v$ into $v^-$ and $v^+$ with a capacity-one arc $v^-\to v^+$. Replace every open corridor $v\to w$ by $v^+\to w^-$, and add a supersource and supersink. Give all corridor and terminal-attachment arcs capacity $n-m+1$.

The inlet vertex arcs give a cut of capacity $n-m$, so no minimum cut uses an arc of capacity $n-m+1$. Max-flow integrality and path decomposition therefore give
\begin{equation}
K_N[m,n]=\min_C|C|,\quad 0\le K_N[m,n]\le n-m,\label{eq:cut}
\end{equation}
where $C$ ranges over waypoint sets meeting every inlet-to-outlet path. This is the vertex-capacitated form of max-flow/min-cut~\cite{ford1956,ahuja1993}. Dividing all split-network capacities by $H$ gives current capacity $K_N[m,n]/H$. The primal problem asks how much bird current can be routed; the dual problem asks for the weakest separating bottleneck. Notice that a minimum cut need not lie on a straight transverse cross-section.

\subsection{Geometric Separation}
Now we invoke the geometry of the problem.
\begin{lemma}[separation of disjoint paths]\label{lem:separation}
Two embedded candidate arcs with no common endpoint have Euclidean distance at least $\min\{d,b\}$. Consequently, the centerlines of two vertex-disjoint paths have distance greater than $2q$ under~\eqref{eq:spacing}.
\end{lemma}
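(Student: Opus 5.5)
The plan is to reduce the first claim to a finite case check in vertical strips, and then deduce the second claim directly. Straight arcs span two longitudinal layers, so I first split each one at its midpoint $x_1=(i+1)d$. After this split, every embedded arc is a union of \emph{pieces}, each contained in a strip $S_i=\{id\le x_1\le (i+1)d\}$ and joining a point on the line $x_1=id$ to a point on the line $x_1=(i+1)d$. The relevant heights are the half-integer rows $b(j+\tfrac12)$. A diagonal piece moves up or down by one row; a straight half-piece stays on its row. In the latter case the endpoint on the odd layer is not a waypoint, because of the parity condition $i+j$ even in~\eqref{eq:waypoints}.

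Fix a strip and two pieces belonging to arcs with no common endpoint. I would first show that the pieces are disjoint and \emph{ordered}: one lies weakly above the other on both boundary lines, and they do not touch on either line. At each boundary line two pieces can meet only at a shared point. On a waypoint layer such a point is a waypoint, hence a common endpoint, which is excluded. At a straight-arc midpoint $((i+1)d,b(j+\tfrac12))$ with $i+j$ odd, only the straight arc through it passes, because no diagonal can have an endpoint there. Pieces that are ordered and disjoint on both boundaries are disjoint throughout the strip, since both are graphs of affine functions of $x_1$. Hence the distance between two disjoint segments is attained with one of the two points an endpoint. This reduces the problem to a point--segment distance: the distance from a piece endpoint $p$ to a piece $\sigma$ that does not contain $p$, within the same strip or across adjacent strips. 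Pieces in strips that are not adjacent are at horizontal distance at least $d$ automatically.

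The case list is short, and I would take the cases in the following order.
\begin{itemize}
\item A straight piece against the waypoints $(i+1,j\pm1)$ it passes between: the distance is $b$.
\item Parallel diagonals whose rows are offset by $2b$: the perpendicular distance is $2bd/\sqrt{b^2+d^2}$. This is at least $\sqrt2\min\{d,b\}$, using $\sqrt{b^2+d^2}\le\sqrt2\max\{d,b\}$.
\item Antiparallel diagonals with no shared endpoint: their closest points are endpoints on a common boundary line, separated vertically by at least $2b$; otherwise the perpendicular distance from an endpoint is bounded as in the parallel case.
\item A straight piece against a diagonal piece: the diagonal's nearest endpoint lies at least one row away, giving distance at least $b$, or a perpendicular bound of the same form.
\item Pieces in adjacent strips: the distance is at least the distance between their endpoints, which is $d$, $b$, or larger.
\end{itemize}
The main obstacle is organizing this enumeration so that it is visibly exhaustive. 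I would index each piece by its entry row, its exit row, and the parity of the strip, and use the reflection symmetries $x_2\mapsto -x_2$ and $x_1\mapsto -x_1$ to halve the number of cases. The same ordering argument also yields the non-crossing property asserted earlier for the embedding.

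For the consequence, take two vertex-disjoint paths. No arc of one shares an endpoint with an arc of the other, since a shared endpoint would be a common waypoint. The distance between the two centerlines is the minimum, over pairs of arcs, of the arc-to-arc distances. By the first part this is at least $\min\{d,b\}$, which exceeds $2q$ by~\eqref{eq:spacing}.
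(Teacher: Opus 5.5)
Your overall architecture is a legitimate alternative to the paper's: split straight arcs, show that pieces in a strip are strictly ordered, then reduce to endpoint--segment distances. However, the case analysis, which carries the whole content of the lemma, contains a step that is false as stated, plus two unjustified ones.

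\textbf{The adjacent-strip bullet is false.} You assert that pieces in adjacent strips are at least as far apart as their endpoints. Take the descending diagonals $z_{i-1,j+1}\to z_{i,j}$ in $S_{i-1}$ and $z_{i,j+2}\to z_{i+1,j+1}$ in $S_i$, with $i+j$ even. They share no endpoint, and their supporting lines are parallel with vertical offset $2b$. Their distance is exactly $2bd/\sqrt{b^2+d^2}$, while their endpoint distances are $\sqrt{b^2+d^2}$, $2b$ and $2d$. For $d\neq b$ the segment distance is strictly smaller than all of these; for $d=1$, $b=2$ it is $4/\sqrt5<2$. The lemma survives because $2bd/\sqrt{b^2+d^2}\ge\sqrt2\min\{d,b\}$, which your own second bullet gives if you apply it regardless of strip. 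As written, though, your argument for this case does not establish the bound.

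\textbf{The straight-versus-diagonal fallback is insufficient.} Your split makes the midpoint $((i+1)d,b(j+\tfrac12))$ of a straight arc from $z_{i,j}$ into a piece endpoint. Its distance to the supporting line of either diagonal through $z_{i+1,j\pm1}$ is only $bd/\sqrt{b^2+d^2}<\min\{d,b\}$. The foot of that perpendicular lies on a diagonal incident to the straight arc. So for nonincident diagonals you must use the segment itself, whose nearest point is the waypoint $z_{i+1,j\pm1}$ at distance $b$; the line bound alone is too weak.

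\textbf{The ordering step is incomplete.} Your justification only excludes touching on the boundary lines; it does not exclude two pieces swapping order across the strip. You can close this as follows. A swap would force two diagonals moving in opposite directions between the same pair of adjacent rows. Their left endpoints would then be waypoints in adjacent rows of one layer, which contradicts the parity condition in \eqref{eq:waypoints}.

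\textbf{A cleaner repair.} Organize the bounds by supporting lines rather than by strips, and do not split the straight arcs at all, so that every segment endpoint is a waypoint. The needed facts are:
distinct diagonal lines of one slope have intercepts differing by nonzero multiples of $2b$;
nonincident collinear diagonals are separated by at least the waypoint spacing $\sqrt{b^2+d^2}$;
diagonal lines of opposite slope meet only at waypoints;
straight arcs lie on even rows between even layers.
Together these bound every nonincident pair independently of its position.

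This is essentially the paper's argument. The paper first normalizes to $d=b=1$, where the minimum is $1$, and then transfers it with the single inequality $|(dx,by)|\ge\min\{d,b\}\,|(x,y)|$. That normalization is what lets it avoid the anisotropic endpoint-versus-segment pitfall your adjacent-strip shortcut falls into.

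Your approach does buy something the paper only asserts in Section~\ref{sec:model}: the ordering argument proves the noncrossing property of the embedding. Your deduction of the second claim from the first is correct and matches the paper.
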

\begin{IEEEproof}
First take $d=b=1$ and ignore the common vertical translation by $1/2$. Diagonal segments have slopes $\pm1$, integer endpoints of even coordinate sum, and length $\sqrt2$. Distinct parallel supporting lines differ by an even intercept, and perpendicular supporting lines intersect at a lattice waypoint. Nonincident diagonal segments therefore have distance at least $\sqrt2$. Horizontal segments lie at even heights, have even longitudinal endpoints, and have length two. Nonincident horizontal segments have distance at least two. The distance between a horizontal and a nonincident diagonal segment is at least one: an odd-height diagonal endpoint is at least one unit from the horizontal line, while an even-height endpoint on that line could meet the horizontal segment only at a common waypoint. The nearest remaining endpoint or segment separation is no smaller. Thus the minimum over all nonincident pairs is one. The linear map $(x,y)\mapsto(dx,by)$ multiplies every distance by at least $\min\{d,b\}$. Segments belonging to vertex-disjoint paths have no common endpoint, and so~\eqref{eq:spacing} finishes the proof.
\end{IEEEproof}

\subsection{A Matching Converse and Achievability Theorem}
Every crossing has the same transit time $\tau=L/\nu$. This permits an exact delivery bound as well as a current bound.
\begin{theorem}[exact deliveries without internal waiting]\label{thm:finite}
For every forest realization and every finite band, let $K=K_N[m,n]$ and $\tau=L/\nu$. Then
\begin{equation}
\sup_\pi D_\pi(T)=
\begin{cases}
0,&0\le T<\tau,\\
K\bigl(\lfloor(T-\tau)/H\rfloor+1\bigr),&T\ge\tau.
\end{cases}\label{eq:finitehorizon}
\end{equation}
A maximum vertex-disjoint path family, with one bird launched on each path at times $0,H,2H,\ldots$, attains these bounds simultaneously. In particular,
\begin{equation}
J_N[m,n]=\frac{K_N[m,n]}{H}.\label{eq:exactcurrent}
\end{equation}
\end{theorem}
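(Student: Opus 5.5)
The proof has two halves: a converse bound valid for every admissible execution, and a single schedule that meets it at every horizon.

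\emph{Converse.} Fix an admissible execution $\pi$ and let $C$ be a minimum waypoint cut, so $|C|=K$ by \eqref{eq:cut}. Every delivered bird follows an inlet-to-outlet path, so it passes through at least one waypoint of $C$. I will charge each delivered bird to the first cut waypoint on its path.

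Birds move at constant forward speed $\nu$ and cannot wait inside the forest. A bird launched at time $s\ge0$ therefore reaches the outlet at exactly $s+\tau$. So a bird delivered by time $T$ was launched at some $s\le T-\tau$, which already gives $D_\pi(T)=0$ for $T<\tau$.

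Now fix a cut waypoint $c$ at longitudinal position $x_1=i_cd$. Any bird charged to $c$ passes it at time $s+i_cd/\nu$. Since $0\le s\le T-\tau$, this passage time lies in an interval of length $T-\tau$. The headway rule forces passage times of distinct birds at $c$ to differ by at least $H$. Hence at most $\lfloor(T-\tau)/H\rfloor+1$ birds are charged to $c$. Summing over the $K$ waypoints of $C$ bounds $D_\pi(T)$ by the right side of \eqref{eq:finitehorizon}.

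The one subtlety is that a path might visit several cut waypoints. This causes no trouble: we charge only to the first one, so each bird is counted exactly once while the per-waypoint headway bound still applies.

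\emph{Achievability.} Take $K$ vertex-disjoint paths from \eqref{eq:cut}. On each path, launch one bird at times $0,H,2H,\ldots$, each flying the piecewise-linear centerline at forward speed $\nu$.

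\begin{itemize}
\item \emph{Dynamics.} Each arc is open and satisfies \eqref{eq:availability}, so the trajectory obeys \eqref{eq:dynamics} and its swept tube avoids $\mathcal O$ by \eqref{eq:open}.
\item \emph{Headway.} Birds on different paths share no waypoints. Birds on the same path pass every waypoint at times differing by multiples of $H$, so the headway rule holds.
\item \emph{Different paths.} Lemma~\ref{lem:separation} gives centerline distance greater than $2q$ between disjoint paths, so two birds on different paths never come within distance $2a$.
\item \emph{Same path.} Consecutive birds on one path are at the same point of the path at times differing by $H$. At any instant their $x_1$-coordinates differ by $\nu H>2q$ (while both are inside the forest, or one is outside). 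Euclidean distance is at least the $x_1$ difference, so they are separated by more than $2q$.
\end{itemize}

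\emph{Counting deliveries.} The bird launched at $kH$ arrives at $kH+\tau$. By time $T\ge\tau$ each path has therefore delivered $\lfloor (T-\tau)/H\rfloor+1$ birds, which matches the converse.

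\emph{Current formula.} Dividing the delivery count by $T$ and letting $T\to\infty$ gives $K/H$. The converse bound gives the same limit from above for every $\pi$, which proves \eqref{eq:exactcurrent}.

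The main point to verify is the same-path separation. It holds because the paths are $x_1$-monotone, so a time offset on a common path converts into a longitudinal gap of exactly $\nu H$.
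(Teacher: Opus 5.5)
Your proposal is correct and matches the paper's proof essentially step for step. The converse uses the same charging of each delivered bird to its first cut waypoint, where every passage falls in a window of length $T-\tau$. Achievability uses the same repeated-launch schedule on a maximum vertex-disjoint family: Lemma~\ref{lem:separation} separates birds on different paths, and the longitudinal gap $\nu H>2q$ separates birds on the same path.
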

\begin{IEEEproof}
Let $C$ be a minimum waypoint cut, so $|C|=K$. No bird can be delivered before $\tau$. For $T\ge\tau$, a bird delivered by $T$ can visit a cut vertex $v$ only during
\begin{equation}
\left[\frac{x_1(v)}\nu,\ T-\frac{L-x_1(v)}\nu\right].\label{eq:cutwindow}
\end{equation}
The left endpoint follows from the initially empty forest, and the right endpoint from the remaining transit time. This interval has length $T-\tau$. The waypoint-headway rule permits at most $\lfloor(T-\tau)/H\rfloor+1$ such visits. Assign each delivered bird to its first visited cut vertex and sum over $C$. This proves the upper bound in~\eqref{eq:finitehorizon}. When $K=0$, no delivery is possible.

For $K>0$, choose $K$ vertex-disjoint open paths. On each path, launch birds at times $0,H,2H,\ldots$ and parameterize motion by longitudinal position. Every segment obeys~\eqref{eq:dynamics}, and every path has transit time $\tau$. Openness gives tree clearance. At a common global time, two birds on the same path that were launched $h$ time units apart have longitudinal separation $\nu h\ge\nu H>2q$, whenever both are in the forest. Birds on different paths remain more than $2q$ apart by Lemma~\ref{lem:separation}. Each waypoint belongs to only one selected path, and its successive passage times differ by exactly $H$. Thus all collision constraints hold without internal waiting. The schedule yields the right side of~\eqref{eq:finitehorizon}; dividing by $T$ proves~\eqref{eq:exactcurrent}.
\end{IEEEproof}

This theorem does not assume independent-edge scheduling. Competing routes may share waypoints in the available network, but an integral optimum selects paths that do not. The geometric embedding then removes unmodeled conflicts between the selected paths. Exact attainment is possible here because every waypoint has the same headway capacity and every crossing has the same transit time. Arbitrary real capacities or arbitrary crossing corridor libraries would require a different scheduling argument.

\section{Transverse Current-Density Limit}\label{sec:transverse}
Consider the current density limit.
\subsection{A Pathwise Band Inequality}
\begin{lemma}[superadditivity and interface cost]\label{lem:bands}
For integers $m<k<n$, every forest realization satisfies
\begin{align}
K_N[m,k]+K_N[k,n]&\le K_N[m,n],\label{eq:super}\\
K_N[m,n]&\le K_N[m,k]+K_N[k,n]+N.\label{eq:interface}
\end{align}
\end{lemma}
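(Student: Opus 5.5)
The plan is to prove both inequalities pathwise, using only the combinatorics of the corridor graph. The key observations are that openness is defined in the full plane, so it does not change when the network is restricted to a band, and that the vertex-cut characterization \eqref{eq:cut} applies to every band.

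For the superadditivity \eqref{eq:super}, I will note that $G_N[m,k]$ and $G_N[k,n]$ occupy the disjoint row sets $2m\le j<2k$ and $2k\le j<2n$. Both are subgraphs of $G_N[m,n]$: every open arc with both endpoints in a sub-band is also an open arc of the larger band, by the band-independence of \eqref{eq:open}. The inlet and outlet sets of $G_N[m,n]$ are the disjoint unions of those of the two sub-bands. I will take a maximum vertex-disjoint inlet-to-outlet family in each sub-band, $K_N[m,k]$ paths and $K_N[k,n]$ paths. These families live on disjoint vertex sets, so their union is a vertex-disjoint family of $K_N[m,k]+K_N[k,n]$ inlet-to-outlet paths in $G_N[m,n]$. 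This gives \eqref{eq:super}.

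For the interface bound \eqref{eq:interface}, I will build a waypoint cut of $G_N[m,n]$ of the required size and then invoke \eqref{eq:cut}.
\begin{itemize}
\item Let $C_1$ and $C_2$ be minimum waypoint cuts of $G_N[m,k]$ and $G_N[k,n]$, so $|C_1|=K_N[m,k]$ and $|C_2|=K_N[k,n]$.
\item Identify the arcs that cross the interface between rows $2k-1$ and $2k$. Straight arcs \eqref{eq:straight} keep $j$ fixed, so only diagonal arcs \eqref{eq:diagonal} between rows $2k-1$ and $2k$ cross.
\item Locate their endpoints in row $2k-1$. The parity constraint $i+j$ even forces these waypoints to have odd $i$ with $0\le i\le 2N$, so $i\in\{1,3,\ldots,2N-1\}$, which is exactly $N$ waypoints. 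Call this set $S$.
\end{itemize}
Every crossing arc has an endpoint in $S$, whether it points upward or downward. Hence any inlet-to-outlet path of $G_N[m,n]$ that avoids $S$ never uses a crossing arc. Such a path stays in the sub-band containing its inlet, so it is an inlet-to-outlet path of $G_N[m,k]$ or of $G_N[k,n]$ and must meet $C_1$ or $C_2$. Thus $C_1\cup C_2\cup S$ meets every inlet-to-outlet path, and \eqref{eq:cut} gives $K_N[m,n]\le |C_1|+|C_2|+N$.

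The only delicate step is the count of interface waypoints. It uses the parity lattice to show that choosing the row-$2k-1$ side of the interface (rather than row $2k$, which has $N+1$ waypoints) yields exactly $N$. It also relies on straight arcs never changing row, so that no crossing escapes $S$. When diagonal arcs are unavailable under \eqref{eq:availability}, no arc crosses at all. In that case $S$ can be taken empty, and the bound holds trivially.
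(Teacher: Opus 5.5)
Your proof is correct. The superadditivity half matches the paper's argument exactly: the two sub-band graphs have disjoint vertex sets, and because openness is defined in the full plane, both optimal families remain valid in $G_N[m,n]$.

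For the interface bound, you and the paper isolate the same $N$ waypoints; your $S$ is the paper's $B_k=\{z_{2h+1,2k-1}:0\le h<N\}$. The difference is which side of the duality you argue on.
\begin{itemize}
\item You work on the dual side. You exhibit $C_1\cup C_2\cup S$ as a waypoint cut of $G_N[m,n]$ and then invoke \eqref{eq:cut}.
\item The paper works on the primal side. It takes an optimal vertex-disjoint family in the larger band. Vertex-disjointness means at most $N$ of its paths can visit $B_k$, and these are discarded. Every remaining path avoids all crossing arcs, so it lies in one sub-band, and there are at most $K_N[m,k]+K_N[k,n]$ of them.
\end{itemize}

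The paper's version uses only the definition of $K_N$ as a maximum number of vertex-disjoint paths, so it needs no max-flow/min-cut duality. Yours relies on the vertex form of Menger's theorem packaged in \eqref{eq:cut}. In exchange, it produces an explicit bottleneck set, which fits the cut-based viewpoint the paper adopts later. Both arguments are complete and equally short.
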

\begin{IEEEproof}
The vertex sets of the two smaller bands are disjoint. Their optimal path families are valid in the larger band and can be combined, proving~\eqref{eq:super}.

For the reverse inequality, every arc crossing the transverse interface between rows $2k-1$ and $2k$ is diagonal and incident to one of the $N$ waypoints
\begin{equation}
B_k=\{z_{2h+1,2k-1}:h=0,\ldots,N-1\}.\label{eq:interfacevertices}
\end{equation}
Take an optimal vertex-disjoint family in the larger band. At most $N$ paths visit $B_k$. Discard those paths. Every remaining path stays entirely in one smaller band, so their number is at most $K_N[m,k]+K_N[k,n]$. This proves~\eqref{eq:interface}.
\end{IEEEproof}
The first inequality is exact superadditivity, without needing any boundary correction or independence assumption. The second explicitly considers the boundary effect: its cost is proportional to depth and need not vanish after width normalization when depth and width grow at the same rate.

\subsection{Existence and Finite-Width Approximation}
Write $K_{N,n}=K_N[0,n]$, $J_{N,n}=J_N[0,n]$, and $W_n=2bn$.
\begin{theorem}[current density in a Poisson forest belt]\label{thm:transverse}
For every fixed $N$, speed $\nu$, and parameters satisfying~\eqref{eq:spacing}, there is a deterministic constant $\kappa_N^\nu\in[0,1]$ such that
\begin{equation}
\frac{K_{N,n}}n\longrightarrow\kappa_N^\nu,\quad
\frac{J_{N,n}}{W_n}\longrightarrow j_N^\nu:=\frac{\kappa_N^\nu}{2bH}\label{eq:transverse}
\end{equation}
almost surely and in $L^1$ as $n\to\infty$. Moreover,
\begin{equation}
\kappa_N^\nu=\sup_{n\ge1}\frac{\E K_{N,n}}n,\label{eq:supmean}
\end{equation}
and, for every finite $n$,
\begin{equation}
0\le j_N^\nu-\frac{\E J_{N,n}}{W_n}\le\frac{N}{HW_n}.\label{eq:meanerror}
\end{equation}
\end{theorem}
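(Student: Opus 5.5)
The plan is to apply Kingman's subadditive ergodic theorem to the array $X_{m,n}=-K_N[m,n]$ for $0\le m<n$. Lemma~\ref{lem:bands}, inequality~\eqref{eq:super}, gives $X_{0,n}\le X_{0,k}+X_{k,n}$ pathwise. This is exactly the subadditivity Kingman's theorem requires, with no error term.

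\emph{Step one: the stationarity hypotheses.} Shifting the forest by the vector $(0,2b)$ maps $G_N[m,n]$ onto $G_N[m+1,n+1]$. The waypoint lattice~\eqref{eq:waypoints} is invariant under this shift because $j\mapsto j+2$ preserves the parity of $i+j$. Openness~\eqref{eq:open} is defined in the full plane, so it transforms covariantly. Since the Poisson process is stationary, the joint law of $\{X_{m+1,n+1}\}$ equals that of $\{X_{m,n}\}$. Integrability is trivial, because $0\le K_N[m,n]\le n-m$ by~\eqref{eq:cut}. Kingman's theorem then gives $K_{N,n}/n\to\kappa_N^\nu:=\sup_n \E K_{N,n}/n$ almost surely and in $L^1$; the supremum form holds because the means $\E K_{N,n}$ are superadditive (Fekete). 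The limit is a priori invariant under the shift. That shift is mixing, hence ergodic, as noted in Section~\ref{sec:model}, so the limit is a.s.\ the deterministic constant in~\eqref{eq:supmean}. It lies in $[0,1]$ because $0\le K_{N,n}\le n$. Finally, Theorem~\ref{thm:finite} gives $J_{N,n}=K_{N,n}/H$ and $W_n=2bn$, so $J_{N,n}/W_n=K_{N,n}/(2bHn)\to j_N^\nu$.

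\emph{Step two: the finite-width bound~\eqref{eq:meanerror}.} The left inequality is immediate from the supremum representation: $\E K_{N,n}/n\le\kappa_N^\nu$, and dividing by $2bH$ gives $\E J_{N,n}/W_n\le j_N^\nu$. For the right inequality, apply the interface bound~\eqref{eq:interface} repeatedly. Splitting a band of $pn$ blocks into $p$ bands of width $n$ gives
\[
K_{N,pn}\le\sum_{s=0}^{p-1}K_N[sn,(s+1)n]+(p-1)N .
\]
Taking expectations and using stationarity yields $\E K_{N,pn}\le p\,\E K_{N,n}+(p-1)N$. Dividing by $pn$ and letting $p\to\infty$ gives $\kappa_N^\nu\le \E K_{N,n}/n+N/n$. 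Dividing by $2bH$ and recalling $W_n=2bn$ gives $j_N^\nu-\E J_{N,n}/W_n\le N/(2bHn)=N/(HW_n)$, as claimed.

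\emph{Expected obstacle.} The only delicate point is checking the stationarity hypotheses carefully. One must make sure that the graph restricted to a band depends only on the shifted configuration, with no boundary artifacts. One must also confirm that the version of Kingman's theorem used, for instance Liggett's, needs only stationarity of the increment sequences, which the single shift provides. Ergodicity then removes any randomness from the limit. Everything else is bookkeeping with Lemma~\ref{lem:bands} and Theorem~\ref{thm:finite}.
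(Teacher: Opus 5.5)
Your proposal is correct and follows the paper's proof essentially step for step. You apply Kingman's theorem to $-K_N[m,n]$ under the transverse shift by $2b$, using the exact superadditivity~\eqref{eq:super}, the bound $0\le K_N[m,n]\le n-m$, ergodicity of the Poisson forest, and superadditivity of the means for~\eqref{eq:supmean}; you then obtain~\eqref{eq:meanerror} by splitting $pn$ periods into $p$ bands, applying~\eqref{eq:interface} $p-1$ times, taking expectations, and letting $p\to\infty$, exactly as the paper does, with your extra checks on lattice invariance and mixing only spelling out what the paper states in one line.
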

\begin{IEEEproof}
Let $\theta$ shift the forest by one transverse period $2b$, with the sign chosen so that
\begin{equation}
K_N[m,m+n](\omega)=K_{N,n}(\theta^m\omega).\label{eq:stationarity}
\end{equation}
The Poisson process is stationary and ergodic under $\theta$. With $X_n=-K_{N,n}$,~\eqref{eq:super} gives
\begin{equation}
X_{m+n}\le X_m+X_n\circ\theta^m,\quad -n\le X_n\le0.\label{eq:kingman}
\end{equation}
Kingman's subadditive ergodic theorem~\cite{kingman1973,steele1989} gives an almost-sure deterministic limit for $X_n/n$. The deterministic bound gives $L^1$ convergence as well. The expectations $\E K_{N,n}$ are superadditive, so their normalized limit is their supremum, proving~\eqref{eq:supmean}. Equation~\eqref{eq:exactcurrent} gives the second limit in~\eqref{eq:transverse}.

To obtain the upper approximation bound, partition a width of $pn$ periods into $p$ bands of $n$ periods each and repeatedly apply~\eqref{eq:interface}. Stationarity gives
\begin{equation}
\E K_{N,pn}\le p\E K_{N,n}+(p-1)N.\label{eq:partitionmean}
\end{equation}
Divide by $pn$ and let $p\to\infty$. Together with~\eqref{eq:supmean}, this yields
\begin{equation}
0\le\kappa_N^\nu-\frac{\E K_{N,n}}n\le\tfrac{N}{n}.\label{eq:kappaerror}
\end{equation}
Divide by $2bH$ to obtain~\eqref{eq:meanerror}.
\end{IEEEproof}
The probabilistic input is transverse ergodicity. Dependence among neighboring corridor states causes no problem because the proof uses stationarity and ergodicity of the underlying forest, not independence of those states. The finite-width bound concerns the ensemble mean.

\subsection{Concentration About the Mean}
The same locality that produces dependence between nearby corridors limits the effect of a local change in the forest.
\begin{proposition}[finite-width concentration]\label{prop:concentration}
For fixed corridor geometry and clearance radius, there is a constant $c>0$, independent of $N,n$, and $\zeta$, such that
\begin{equation}
\Prob\{|K_{N,n}-\E K_{N,n}|\ge t\}
\le2\exp\left\{-\frac{ct^2}{(N+1)(n+1)}\right\}\label{eq:concentration}
\end{equation}
for $t>0$. Consequently, for $\varepsilon>0$,
\begin{align}
&\Prob\left\{\left|\frac{K_{N,n}}n-\kappa_N^\nu\right|>\tfrac{N}{n}+\varepsilon\right\}\notag\\
&\hspace{12mm}\le2\exp\left\{-\frac{c\varepsilon^2n^2}{(N+1)(n+1)}\right\}.\label{eq:concentrationlimit}
\end{align}
\end{proposition}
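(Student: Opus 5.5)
The plan is a bounded-differences (McDiarmid) argument applied to a block decomposition of the Poisson process. Partition the relevant region into $M$ cells indexed by longitudinal and transverse position. A natural choice uses cells of longitudinal length $d$ (or $2d$) and transverse length $2b$. The region is the band $[0,L]\times[0,2bn]$, enlarged by $R$ on every side, since openness of arcs in $G_N[0,n]$ depends only on Poisson points within distance $R$ of the centerlines. Then $M\le C_0(N+1)(n+1)$, where $C_0$ depends only on $d,b,R$. The restrictions of the Poisson process to these cells are independent, and $K_{N,n}$ is a measurable function $f(\omega_1,\dots,\omega_M)$ of them.

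Next, bound the effect of resampling one cell. Changing the points in a single cell can alter the open/closed status only of arcs whose clearance tubes (radius $R$ around the centerline) meet that cell. Those arcs are incident to a set $S$ of waypoints lying within distance $R+\max\{2d,b\}$ of the cell, and $|S|\le c_1$ depends only on $d,b,R$. If we delete all waypoints of $S$ from a maximum vertex-disjoint family in one configuration, every remaining path uses only arcs whose status is unchanged. Those paths therefore remain open in the other configuration. Since the paths are vertex-disjoint, at most $|S|$ of them are lost, so $|f(\omega)-f(\omega')|\le c_1$.

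McDiarmid's inequality then gives $\Prob\{|K-\E K|\ge t\}\le 2\exp\{-2t^2/(Mc_1^2)\}$. Substituting the bound on $M$ yields~\eqref{eq:concentration} with $c=2/(C_0c_1^2)$. This constant does not depend on $\zeta$, because McDiarmid's inequality uses no information about the distribution of each independent block. It does not depend on $N$ or $n$, because $C_0$ and $c_1$ are purely geometric. For the consequence~\eqref{eq:concentrationlimit}, combine this with~\eqref{eq:kappaerror}. The triangle inequality gives $|K_{N,n}/n-\kappa_N^\nu|\le |K_{N,n}-\E K_{N,n}|/n+N/n$. The event in~\eqref{eq:concentrationlimit} therefore implies $|K_{N,n}-\E K_{N,n}|>\varepsilon n$, and we apply~\eqref{eq:concentration} with $t=\varepsilon n$.

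The main obstacle is the bounded-difference step. It requires a clean, uniform count of the waypoints whose incident arcs have clearance tubes meeting a given cell. It also requires care at the band boundary, where cells outside $[0,L]\times[0,2bn]$ but within distance $R$ of it still matter, so that the cell count stays $O((N+1)(n+1))$. Once the locality radius $R+\max\{2d,b\}$ is fixed, both counts follow from the lattice structure in~\eqref{eq:waypoints}.
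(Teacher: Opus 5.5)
Your proposal is correct and is essentially the paper's proof: you use a partition into $O((N+1)(n+1))$ independent cells near the band, and you get a uniform bounded difference by discarding the selected vertex-disjoint paths that touch anything affected by one cell. The paper counts affected arcs and you count their endpoints, which amounts to the same thing; then McDiarmid's inequality gives a $\zeta$-free constant, and \eqref{eq:kappaerror} with $t=\varepsilon n$ yields the consequence. One harmless slip: a diagonal arc has length $\sqrt{d^2+b^2}$, which can exceed $\max\{2d,b\}$, so your locality radius should be $R+\max\{2d,\sqrt{d^2+b^2}\}$; any fixed radius still gives a uniform geometric bound on $|S|$.
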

\begin{IEEEproof}
Partition the plane into half-open squares of a fixed positive side length. Only squares meeting the radius-$R$ neighborhoods of candidate corridors can affect $K_{N,n}$. There are at most $C_0(N+1)(n+1)$ such squares, where $C_0$ depends only on the fixed geometry and clearance. The restrictions of the Poisson process to these squares are independent.

Resampling the process in one square can change the states of at most $A_0$ candidate arcs, for a fixed finite $A_0$. From an optimal vertex-disjoint family before the change, discard every path using an affected arc. Each affected arc belongs to at most one selected path, so at most $A_0$ paths are lost. The same argument with the two configurations reversed proves that $K_{N,n}$ changes by at most $A_0$.

The bounded-differences inequality~\cite{mcdiarmid1989}, obtained by applying Hoeffding's exponential bound~\cite{hoeffding1963} to the associated martingale differences, proves~\eqref{eq:concentration}, with $c=2/(C_0A_0^2)$. Equation~\eqref{eq:kappaerror} then gives~\eqref{eq:concentrationlimit}.
\end{IEEEproof}
At fixed depth, the bound controls random fluctuations of $K_{N,n}/n$ on the $n^{-1/2}$ scale, whereas its mean bias is at most $N/n$. Note that bound~\eqref{eq:concentration} alone does not prove a joint-growth law when $N$ is proportional to $n$.

\section{Positivity, Depth, and Speed}\label{sec:depth}
\subsection{A Positive Current at Every Finite Depth}
Let $p_N$ be the probability that a fixed straight centerline of length $L=2Nd$ has clearance $R$. Its forbidden region for tree centers is a capsule of area $2RL+\pi R^2$, giving
\begin{equation}
p_N=\exp\{-\zeta(2RL+\pi R^2)\}.\label{eq:straightprob}
\end{equation}
\begin{proposition}[Poisson bounds and a solvable regime]\label{prop:finitepositive}
The current density satisfies
\begin{equation}
\frac{e^{-\zeta(2RL+\pi R^2)}}{2bH}\le j_N^\nu
\le\frac{e^{-\zeta\pi R^2}}{2bH}.\label{eq:finitepositive}
\end{equation}
In particular, it is strictly positive at every fixed finite depth. If $\nu b/d>u$, so that diagonal arcs are unavailable, then
\begin{equation}
j_N^\nu=\frac{e^{-\zeta(2RL+\pi R^2)}}{2bH}.\label{eq:straightonly}
\end{equation}
\end{proposition}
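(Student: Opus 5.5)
Both bounds in \eqref{eq:finitepositive} reduce, via $j_N^\nu=\kappa_N^\nu/(2bH)$ and \eqref{eq:supmean}, to bounds on $\E K_{N,n}/n$. For the lower bound, I will use straight paths only. In $G_N[0,n]$, each row $j=0,2,\ldots,2n-2$ carries a straight path $(0,j)\to(2,j)\to\cdots\to(2N,j)$. Its union of centerlines is a single horizontal segment of length $L$, so by \eqref{eq:open} it is open exactly when no tree center lies in the capsule of radius $R$ about that segment, with probability $p_N$ as in \eqref{eq:straightprob}. Distinct rows give vertex-disjoint paths. Hence $K_{N,n}\ge\sum_j \ind\{\text{row }j\text{ open}\}$, and taking expectations gives $\E K_{N,n}\ge np_N$. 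No independence between rows is needed, since only linearity of expectation is used. Dividing by $n$ and invoking \eqref{eq:supmean} yields $\kappa_N^\nu\ge p_N$.

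For the upper bound, I will use the inlet cut. Every path starts at one of the $n$ inlet waypoints $z_{0,j}$, and vertex-disjointness gives $K_{N,n}\le\#\{\text{inlets with at least one open outgoing arc}\}$. An arc can be open only if its starting point has clearance $R$, that is, only if no tree center lies in $B(z_{0,j},R)$. This event has probability $e^{-\zeta\pi R^2}$. So $\E K_{N,n}\le ne^{-\zeta\pi R^2}$ for every $n$, and \eqref{eq:supmean} gives $\kappa_N^\nu\le e^{-\zeta\pi R^2}$. Strict positivity at every finite depth follows from the lower bound, since $p_N>0$.

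For the straight-only regime $\nu b/d>u$, the graph $G_N[0,n]$ consists of the $n$ disjoint horizontal rows, each a single chain from inlet to outlet. Then $K_{N,n}$ equals the number of fully open rows exactly, and $\E K_{N,n}=np_N$ for every $n$. The supremum in \eqref{eq:supmean} is therefore $p_N$, which gives \eqref{eq:straightonly}.

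There is no real obstacle here. The only points to verify are two. First, that a row's openness is exactly the capsule void event: the union of the $N$ closed segments is the full segment, and the union of their $R$-neighborhoods is the capsule of area $2RL+\pi R^2$. Second, that in the straight-only case no other inlet-to-outlet paths exist. The second holds because vertices at odd $i$ are then isolated from the inlet-to-outlet structure, and straight arcs preserve $j$.
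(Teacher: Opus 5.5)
Your proposal is correct and follows essentially the same route as the paper. The lower bound comes from the $n$ vertex-disjoint straight rows with linearity of expectation and \eqref{eq:supmean}, the upper bound from counting inlets whose radius-$R$ disk is void, and the equality in the straight-only regime from the fact that every crossing is then a single row; your explicit checks that the row's closed segments give exactly the capsule void event and that odd-layer waypoints are isolated are details the paper leaves implicit.
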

\begin{IEEEproof}
For each inlet row $j=2k$, consider the straight horizontal path across the entire belt. It is open if and only if its length-$L$ capsule contains no tree center. Paths in different inlet rows are vertex-disjoint. If $I_k$ denotes this void event, then
\begin{equation}
K_{N,n}\ge\sum_{k=0}^{n-1}\ind_{I_k}.\label{eq:straightlower}
\end{equation}
Each indicator has mean $p_N$. Taking expectations and using~\eqref{eq:supmean} gives $\kappa_N^\nu\ge p_N$. For the upper bound, every crossing needs a clear inlet waypoint. The probability that its radius-$R$ disk contains no tree center is $e^{-\zeta\pi R^2}$. Count these inlets and take expectations and then width limits.

When diagonal arcs are absent, no path changes its row, so every crossing is one of the straight paths in~\eqref{eq:straightlower} and equality holds. Alternatively, the transverse ergodic theorem gives the same equality directly. Independence between neighboring row indicators is not needed.
\end{IEEEproof}
This explicit example separates finite-depth transport from indefinite flight. In the straight-only regime, every fixed row is almost surely blocked somewhere along an infinite forward ray, but a positive fraction of rows cross any specified finite depth. The fraction decreases exponentially with depth. Thus positive finite-belt current neither implies an infinite trajectory nor contradicts a single-bird critical-speed result.

\subsection{Increasing the Forest Depth}
\begin{proposition}[depth monotonicity]\label{prop:depth}
For a common forest realization and fixed width,
\begin{equation}
K_{N+1,n}\le K_{N,n}.\label{eq:depthmono}
\end{equation}
Consequently $j_{N+1}^\nu\le j_N^\nu$, and the iterated limits
\begin{equation}
j_\infty^\nu=\lim_{N\to\infty}j_N^\nu=\inf_{N\ge1}j_N^\nu,\quad
\kappa_\infty^\nu=2bHj_\infty^\nu\label{eq:iterated}
\end{equation}
exist.
\end{proposition}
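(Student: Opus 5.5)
The argument is pathwise: restrict paths of the deeper belt. Fix a realization and width $n$, and take a maximum vertex-disjoint family of $K_{N+1,n}$ inlet-to-outlet paths in $G_{N+1}[0,n]$. Every such path runs from $x_1=0$ to $x_1=2(N+1)d$ and, as noted after~\eqref{eq:availability}, visits every intervening even longitudinal layer, in particular layer $i=2N$.

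Truncate each path at its first visit to layer $2N$. The truncated path has the following properties.
\begin{itemize}
\item It uses only arcs with both endpoints in $0\le i\le 2N$. Its transverse indices stay in $0\le j<2n$, since the original path did.
\item Its arcs are open. Openness~\eqref{eq:open} is defined in the full plane and does not depend on the depth of the band, so an arc open in $G_{N+1}[0,n]$ is open in $G_N[0,n]$.
\item It starts at an inlet waypoint of $G_N[0,n]$ (the inlets at $i=0$ coincide) and ends at a waypoint of layer $2N$, which is an outlet of $G_N[0,n]$.
\end{itemize}
Vertex-disjointness is inherited by subpaths. We therefore obtain $K_{N+1,n}$ vertex-disjoint inlet-to-outlet paths in $G_N[0,n]$, which proves~\eqref{eq:depthmono}.

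For the consequences, take expectations, divide by $n$, and let $n\to\infty$ using Theorem~\ref{thm:transverse}. This gives $\kappa_{N+1}^\nu\le\kappa_N^\nu$. Equivalently, one can apply~\eqref{eq:supmean} termwise, since $\E K_{N+1,n}/n\le\E K_{N,n}/n$ for every $n$. Dividing by $2bH$ gives $j_{N+1}^\nu\le j_N^\nu$. The sequence $(j_N^\nu)$ is nonincreasing and bounded below by $0$, so its limit exists and equals its infimum. The relation $\kappa_\infty^\nu=2bHj_\infty^\nu$ follows from the proportionality $j_N^\nu=\kappa_N^\nu/(2bH)$ for each $N$.

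The only point needing care is the claim that every path reaches layer $2N$ exactly at a waypoint, and does not skip over it via a straight arc. This holds because straight arcs start only at even $i$ and advance by two, so they land on even layers. Diagonal arcs advance by one. Hence the path lands on every even layer, including $2N$.
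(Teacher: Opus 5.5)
Your proof is correct and follows the paper's own argument. You truncate a maximum vertex-disjoint family at layer $2N$, which every path must visit because no arc skips an even layer, and then pass to expectations and width limits via Theorem~\ref{thm:transverse}; your additional checks (band-independence of openness, the common inlet set, and why a straight arc cannot jump over layer $2N$) just spell out what the paper's three-line proof leaves implicit.
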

\begin{IEEEproof}
Every path to layer $2(N+1)$ visits layer $2N$: no allowed arc skips an even layer. Truncating a vertex-disjoint family at that layer gives a vertex-disjoint crossing family for the shorter belt. Taking expectations and then width limits proves the assertions.
\end{IEEEproof}
The order in~\eqref{eq:iterated} is width first and depth second. Section~\ref{sec:bulk} identifies joint-growth sequences with the same limit. The following bound does not restrict their relative growth rates.
\begin{proposition}[survival bound]\label{prop:survival}
Let $p_\infty^\nu$ be the probability that a specified inlet waypoint has an infinite open forward path in the full corridor graph. For any integer sequences $N_k\to\infty$ and $n_k\to\infty$,
\begin{equation}
\limsup_{k\to\infty}\frac{J_{N_k,n_k}}{2bn_k}
\le\frac{p_\infty^\nu}{2bH}\quad\text{almost surely}.\label{eq:survivalbound}
\end{equation}
If $p_\infty^\nu=0$, the normalized current converges to zero almost surely and in $L^1$. Also $\kappa_\infty^\nu\le p_\infty^\nu$.
\end{proposition}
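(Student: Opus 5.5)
Fix a depth cutoff $M\in\N$. By Theorem~\ref{thm:finite}, $J_{N_k,n_k}=K_{N_k,n_k}/H$, so it suffices to bound $K_{N_k,n_k}/n_k$. Once $N_k\ge M$, truncate an optimal family at layer $2M$, exactly as in Proposition~\ref{prop:depth}. Each of the $K_{N_k,n_k}$ paths starts at a distinct inlet waypoint $z_{0,2\ell}$ of the band $[0,n_k]$. Each such path is an open forward path of depth at least $2M$ in the full corridor graph. Openness is defined in the full plane, and the band restriction only removes arcs. Hence
\begin{equation*}
K_{N_k,n_k}\le\sum_{\ell=0}^{n_k-1}\ind_{A_\ell^M},
\end{equation*}
where $A_\ell^M$ is the event that inlet $z_{0,2\ell}$ has an open forward path reaching layer $2M$ in the full graph. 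This step is pathwise and needs no relation between $N_k$ and $n_k$.

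The events $A_\ell^M$ are transverse shifts of $A_0^M$ by $\theta^\ell$, and $\theta$ is ergodic. Birkhoff's ergodic theorem therefore gives $n^{-1}\sum_{\ell<n}\ind_{A_\ell^M}\to\Prob(A_0^M)$ almost surely along every sequence $n\to\infty$, in particular along $n_k$. It follows that $\limsup_k K_{N_k,n_k}/n_k\le\Prob(A_0^M)$ almost surely for each fixed $M$. Intersect over the countably many $M$. The events $A_0^M$ decrease in $M$, so $\Prob(A_0^M)\downarrow\Prob\bigl(\bigcap_M A_0^M\bigr)$.

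The main obstacle is identifying $\bigcap_M A_0^M$ with the event that an infinite open forward path exists. I would use König's lemma. From a fixed vertex, every vertex at longitudinal layer $i$ has bounded transverse offset $|j|\le i$, so each layer contains finitely many reachable vertices and the out-degree is bounded. The open reachable subgraph is therefore locally finite. If paths reach arbitrarily deep layers, then infinitely many vertices are reachable, and an infinite open path exists. Thus $\Prob(A_0^M)\to p_\infty^\nu$. Dividing by $2bH$ yields~\eqref{eq:survivalbound}.

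If $p_\infty^\nu=0$, the normalized current is nonnegative with $\limsup$ equal to $0$, so it converges to zero almost surely. Since $0\le K_{N_k,n_k}/n_k\le1$, dominated convergence gives $L^1$ convergence. For the final claim, take expectations in the pathwise inequality at width $n$ and depth $N\ge M$: $\E K_{N,n}/n\le\Prob(A_0^M)$. Let $n\to\infty$ to get $\kappa_N^\nu\le\Prob(A_0^M)$, then use $\kappa_\infty^\nu\le\kappa_N^\nu$ by Proposition~\ref{prop:depth}, and finally let $M\to\infty$.
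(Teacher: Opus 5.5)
Your proposal is correct and follows essentially the same route as the paper. Both arguments use the pathwise inlet-count bound $K_{N_k,n_k}\le\sum_{\ell}\ind_{A_\ell^M}$ for $N_k\ge M$, then transverse ergodicity on a common full-measure event for all $M$, then the finitely-branching (König) argument identifying $\bigcap_M A_0^M$ with survival, then bounded convergence for the $L^1$ claim, and finally the fixed-depth inlet count with $M\to\infty$ to get $\kappa_\infty^\nu\le p_\infty^\nu$.
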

\begin{IEEEproof}
Fix an integer $M$. Let $A_M(v)$ be the event that an inlet waypoint $v$ can reach longitudinal layer $2M$ in the full graph, allowing arbitrary transverse excursions. For $N_k\ge M$, distinct crossing paths start at distinct inlet waypoints for which $A_M(v)$ occurs. Therefore
\begin{equation}
K_{N_k,n_k}\le\sum_{h=0}^{n_k-1}\ind_{A_M(z_{0,2h})}.\label{eq:survivalcount}
\end{equation}
By transverse ergodicity, the right side divided by $n_k$ converges almost surely to $\Prob(A_M(z_{0,0}))$. This holds on a common probability-one event for all positive integers $M$. The events decrease with $M$, and their intersection is the event of an infinite open path: finite outdegree and strict forward progress give this implication by the finitely branching infinite-path principle. Letting $M\to\infty$ proves~\eqref{eq:survivalbound}. Nonnegativity and the deterministic bound $J_{N_k,n_k}/(2bn_k)\le1/(2bH)$ give $L^1$ convergence when $p_\infty^\nu=0$. Applying the same inlet count at fixed depth $M$, taking width limits, and then letting $M\to\infty$ proves $\kappa_\infty^\nu\le p_\infty^\nu$.
\end{IEEEproof}
An infinite open corridor path would concatenate to a collision-free trajectory of~\eqref{eq:dynamics}, with clearance for the same bird radius. Thus absence of infinite trajectories in that physical model also forces $p_\infty^\nu=0$.

\subsection{Separating Service Rate from Maneuverability}
For speed comparisons, let the headway be
\begin{equation}
H(\nu)=s/\nu,\quad s>2q,\label{eq:speedheadway}
\end{equation}
so $s$ is the required longitudinal separation. On a common forest and fixed corridor geometry, the available arc set decreases as $\nu$ increases. Hence $\kappa_N^\nu$ is nonincreasing and
\begin{equation}
\frac{j_N^{\nu_2}}{\nu_2}\le\frac{j_N^{\nu_1}}{\nu_1},\quad \nu_2>\nu_1.\label{eq:speedmono}
\end{equation}
The same inequality holds for $j_\infty^\nu$. The service-normalized current is therefore nonincreasing, even though current itself need not be. This coarse library has only one lateral-maneuver threshold. In particular, its straight-only finite-depth current grows proportionally to $\nu$ under this headway convention, while its infinite-depth current is zero.

\section{Admission Control and Congestion}\label{sec:queue}
\subsection{Offline Route Construction}
The throughput-achieving controller first determines which candidate corridors are open, builds the vertex-split network, computes an integral maximum flow, and extracts its $K=K_{N,n}$ paths. It then launches at most one bird per selected path at each epoch $kH$. All forest transit takes the fixed time $L/\nu$. For deterministic saturated demand, Theorem~\ref{thm:finite} achieves the capacity exactly and maximizes deliveries at every horizon. Thinning the launch sequence gives any smaller long-run rate. With fixed selected paths, no intermediate queue or backpressure optimization is necessary. This is due to the matched corridor, headway, and capacity model. Offline geometry and online admission are separated. The known forest determines an integral maximum flow and a fixed family of $K$ disjoint routes. At each launch epoch, the upstream queue sends at most one bird on each route. All admitted birds cross in time $L/\nu$, with no queue inside the forest.

\subsection{Stochastic Arrivals}
Now consider stochastic arrivals. Fix a realized finite graph with $K>0$. Suppose a staging facility can dispatch to all $K$ selected inlet waypoints. Let $Q_k$ be its queue just before the $k$th launch epoch. Let $A_k$ be the number of arrivals after that epoch and before the next. Assume the $A_k$ are i.i.d., independent of $Q_0$, with
\begin{equation}
\E A_k=\alpha<K,\quad \E A_k^2<\infty,\quad \E Q_0^2<\infty.\label{eq:arrival}
\end{equation}
Launch $\min\{Q_k,K\}$ birds along distinct selected paths. Then
\begin{equation}
Q_{k+1}=[Q_k-K]^++A_k.\label{eq:queue}
\end{equation}
The realized forest and path family remain fixed during operation.
\begin{theorem}[stable admission below capacity]\label{thm:queue}
The admission policy strongly stabilizes~\eqref{eq:queue}, with
\begin{equation}
\limsup_{M\to\infty}\tfrac{1}{M}\sum_{k=0}^{M-1}\E Q_k
\le\frac{\E A_0^2+K^2}{2(K-\alpha)}.\label{eq:backlog}
\end{equation}
Mean cumulative departures, divided by the number of launch epochs, converge to $\alpha$. All admitted birds traverse the forest without waiting, and mean forest output has the same long-run rate after the fixed transit delay.
\end{theorem}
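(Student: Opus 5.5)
The plan is a quadratic Lyapunov drift argument for the recursion~\eqref{eq:queue}, with $\Phi(Q)=Q^2$. Write $Q_{k+1}=Q_k-\mu_k+A_k$, where $\mu_k=\min\{Q_k,K\}$ is the number launched at epoch $k$. The first step is to square this identity. Use $(Q_k-\mu_k)^2\le Q_k^2-2KQ_k+K^2$; this holds because $\mu_k=K$ when $Q_k\ge K$, and when $Q_k<K$ the left side is zero while $Q_k^2-2KQ_k+K^2=(K-Q_k)^2\ge0$. The cross term is $2(Q_k-\mu_k)A_k\le 2Q_kA_k$. Since $A_k$ is independent of $Q_k$ (the arrivals are i.i.d.\ and independent of $Q_0$, and $Q_k$ depends only on $Q_0,A_0,\dots,A_{k-1}$), taking expectations gives
\begin{equation*}
\E Q_{k+1}^2-\E Q_k^2\le \E A_0^2+K^2-2(K-\alpha)\E Q_k .
\end{equation*}
Before using this, I will check by induction that $\E Q_k^2<\infty$ for every $k$. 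This follows from $Q_{k+1}\le Q_k+A_k$ and the second-moment assumptions in~\eqref{eq:arrival}, so every expectation above is finite.

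The second step is to telescope over $k=0,\dots,M-1$. Dropping the nonnegative term $\E Q_M^2$ gives
\begin{equation*}
2(K-\alpha)\sum_{k<M}\E Q_k\le M(\E A_0^2+K^2)+\E Q_0^2 .
\end{equation*}
Dividing by $M$ and letting $M\to\infty$ yields~\eqref{eq:backlog}, which is the strong-stability claim.

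The third step is the throughput claim. Telescoping the linear recursion gives $\sum_{k<M}\mu_k=Q_0-Q_M+\sum_{k<M}A_k$, so it suffices to show $\E Q_M/M\to0$. This is the step that needs care, because the time-averaged bound does not directly control $\E Q_M$. I will use the same drift inequality again: it gives $\E Q_M^2\le \E Q_0^2+M(\E A_0^2+K^2)$, hence $\E Q_M\le\sqrt{\E Q_M^2}=O(\sqrt M)$, which is $o(M)$. Therefore the mean cumulative departures divided by $M$ converge to $\alpha$. An alternative would be a regeneration argument, but the moment bound is simpler.

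Finally, each launched bird travels along one of the fixed vertex-disjoint paths, with at most one bird per path per epoch. Thinning the schedule of Theorem~\ref{thm:finite} preserves every clearance and headway constraint, so admitted birds cross without waiting in transit time $L/\nu$. Forest output is therefore the departure sequence shifted by a fixed delay, and it has the same long-run mean rate.
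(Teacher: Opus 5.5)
Your proposal is correct and follows the paper's proof: the same quadratic Lyapunov drift (your $Q^2$ versus the paper's $Q^2/2$, giving the same inequality $([Q-K]^++A)^2\le Q^2+K^2+A^2+2Q(A-K)$), the same telescoping for the backlog bound, the same $\E Q_M^2=O(M)$ estimate for $\E Q_M/M\to0$ and the throughput claim, and the same thinning argument for collision-free crossing. Your explicit induction showing $\E Q_k^2<\infty$, and your use of independence of $A_k$ from $Q_k$, state carefully what the paper's conditional-drift step leaves implicit.
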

\begin{IEEEproof}
With $V(Q)=Q^2/2$, the inequality
\begin{equation}
([Q-K]^++A)^2\le Q^2+K^2+A^2+2Q(A-K)\label{eq:queueineq}
\end{equation}
yields the conditional drift bound
\begin{align}
\E[V(Q_{k+1})-V(Q_k)\mid Q_k]&\le B-(K-\alpha)Q_k,\notag\\
B&=\tfrac12(\E A_0^2+K^2).\label{eq:drift}
\end{align}
Taking expectations, summing, and using $V\ge0$ proves~\eqref{eq:backlog}. Dropping the negative term instead gives $\E Q_M^2\le\E Q_0^2+2BM$, so $\E Q_M/M\to0$. Summing the exact queue equation therefore shows that mean cumulative launches divided by $M$ converge to $\alpha$. Each launch arrives at the outlet after $L/\nu$, so the same long-run mean rate holds there. Collision avoidance follows because the schedule is a thinning of the schedule in Theorem~\ref{thm:finite}.
\end{IEEEproof}
The drift argument is a standard queueing capacity mechanism~\cite{neely2010}, here with buffering outside the forest. It guarantees finite time-average expected backlog. Fixed arrivals at separate inlet locations need not be dispatchable in this way and are not covered by a condition on aggregate rate alone.

\subsection{Bulk Load Condition}
Suppose input traffic has mean rate $\lambda W_n$ birds per unit time. The mean number arriving per launch interval is $\alpha_n=\lambda W_nH$. For any fixed $\lambda<j_N^\nu$, Theorem~\ref{thm:transverse} gives
\begin{equation}
\lambda W_nH<K_{N,n}\label{eq:load}
\end{equation}
for every sufficiently large $n$, almost surely. Thus each sufficiently wide realized belt admits the stable upstream policy under the stated moment and dispatch assumptions.

If $\lambda>j_N^\nu$, the cut bound is smaller than the offered rate for every sufficiently large width, almost surely. More explicitly, for any finite realization and an arrival process satisfying $A(T)/T\to\varrho>K/H$, conservation and Theorem~\ref{thm:finite} imply
\begin{equation}
\liminf_{T\to\infty}\frac{A(T)-D_\pi(T)}T\ge\varrho-\tfrac{K}{H}>0.\label{eq:overload}
\end{equation}
Without rejection, accumulated demand grows at least linearly. Since every admitted bird spends exactly $L/\nu$ inside the forest, and inlet headways bound the number admitted during any such interval, forest inventory is uniformly bounded for a fixed finite graph. Thus the growing backlog must remain upstream. The cut capacity is an operational congestion threshold for this protocol. This conservation interpretation parallels capacity-limited traffic~\cite{lighthill1955,richards1956}. A saturated internal cut indicates there is an obstruction even though admission keeps the physical backlog outside the forest.

At exact critical loading, deterministic arrivals can be matched to the periodic schedule, but the stochastic stability theorem requires strict slack. The distinction between capacity and delay is important: exact deterministic throughput does not imply uniformly small delay for stochastic demand.

\section{Two-Dimensional Bulk Transport}\label{sec:bulk}
\subsection{Another Scaling for the Flight Problem}
The fixed-depth law uses an exact path-combination inequality. Its accompanying interface estimate costs $N$ waypoints, proportional to depth. When $L=2Nd$ and $W=2bn$ grow proportionally, this estimate no longer gives a vanishing error after normalization by $W$. We therefore use a different probabilistic route for this regime, while still using the same definitions for forest, corridor library, waypoint headway, and no-stopping schedule from Sections~\ref{sec:model}--\ref{sec:queue}.

Max-flow/min-cut duality reduces the bulk-transport problem to the capacity of separating cuts in the random corridor network. For i.i.d. lattice capacities, Rossignol and Th{\'e}ret establish laws of large numbers for tilted cylinders~\cite{rossignol2010tilted} and related box laws and lower large deviations~\cite{rossignol2010box}. These results also provide positivity information under specified assumptions. Forest corridors, however, are directed and dependent because neighboring swept tubes have overlapping clearance neighborhoods. We prove the required asymmetric dual shape theorem directly, using the deterministic geometry to control endpoint perturbations.

\subsection{A Bounded Planar Representation of Waypoint Capacity}
Fix $\nu$ and suppress it in the auxiliary notation. Consider the full geometric library from before, including every diagonal candidate, and assign a closed or dynamically unavailable corridor capacity zero. Replace each waypoint $v$ by $v^-,v^+$ and set
\begin{align}
c(v^-\to v^+)&=1,\notag\\
c(v^+\to w^-)&=\ind\{v\to w\text{ is available and open}\}.\label{eq:boundedcaps}
\end{align}
In a finite band, the source set consists of the inlet $v^-$ vertices and the sink set of the outlet $v^+$ vertices, with unrestricted terminal access. All capacities in the spatial network are now bounded independently of the band width. Only a purely computational supersource or supersink would require additional attachment arcs.

\begin{lemma}[flow-equivalent planar realization]\label{lem:planar}
The maximum flow in~\eqref{eq:boundedcaps} is $K_{N,n}$. The infinite auxiliary graph has a periodic planar embedding. It can be completed by deterministic zero-capacity arcs so that its cells have uniformly bounded diameter, without changing any finite-band flow.
\end{lemma}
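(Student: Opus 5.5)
The plan has three parts: flow equivalence, a planar embedding, and a completion by zero-capacity arcs.

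\emph{Flow equivalence.} The network~\eqref{eq:boundedcaps} differs from the split network of Section~\ref{sec:finite} in two ways. Corridor arcs now have capacity one instead of $n-m+1$, and closed or unavailable corridors appear with capacity zero instead of being absent. Zero-capacity arcs carry no flow, so only the capacity change needs an argument. Integrality and path decomposition show that any integral flow in the split network decomposes into unit paths. Each waypoint arc $v^-\to v^+$ has capacity one, and every corridor arc $v^+\to w^-$ is the unique continuation out of $v^+$ along its path. Hence each corridor arc carries at most the unit flow entering $v^+$, so the capacity-one bound never binds. The minimum cut of capacity $K_{N,n}$ from~\eqref{eq:cut} consists of waypoint arcs and remains a cut. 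Identifying the source set with inlet $v^-$ vertices and the sink set with outlet $v^+$ vertices, with unrestricted terminal access, is then equivalent to attaching a supersource and supersink.

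\emph{Planar embedding.} Take the embedding of the corridor library from Section~\ref{sec:model}, which by Lemma~\ref{lem:separation} is noncrossing except at common endpoints. Replace each waypoint $z_{i,j}$ by a short horizontal segment with $v^-$ at its left end and $v^+$ at its right end, of length much smaller than $\min\{d,b\}$. All incoming arcs come from smaller $i$, so they attach to $v^-$; all outgoing arcs go to larger $i$, so they attach to $v^+$. A small disk around each waypoint meets only its incident arcs. Within that disk, the rerouting places incoming arcs on the left half and outgoing arcs on the right half, with the arc $v^-\to v^+$ along the horizontal diameter. Angular order is preserved within each half, so no crossings are created. The construction is invariant under the lattice translations by $(2d,0)$ and $(0,2b)$, which gives a periodic planar embedding.

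\emph{Bounded cells.} Add deterministic zero-capacity arcs to subdivide faces without crossings. When diagonals are present in the library (availability is coded as capacity, so all candidate arcs are drawn), the faces are already bounded. Each face is a small polygon bounded by one straight arc, adjacent diagonals and split segments, with diameter $O(d+b)$. As a safeguard I would also add zero-capacity periodic arcs joining $v^+$ to $w^-$ for nearby waypoints inside existing faces, such as short vertical connectors, until every face is a polygon of diameter at most a constant $C(d,b)$. Zero capacities leave all cut and flow values unchanged, so every finite-band maximum flow is preserved. Arcs that would create inlet-to-outlet routes must be harmless, and they are, because their capacity is zero.

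The main obstacle is the local rerouting at split waypoints: incoming and outgoing arcs must be separable into two angular sectors at every vertex type. This holds because all arcs strictly increase $x_1$. The rest is bookkeeping.
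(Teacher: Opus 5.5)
Your proof is correct. The split embedding (incoming arcs on the left, outgoing arcs on the right, a short horizontal gate from $v^-$ to $v^+$, periodicity under $\Lambda$) is the paper's construction. Your flow step rests on the same fact the paper uses: a corridor leaving $v^+$ can carry only what crossed the unit gate.

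The paper applies this fact, by conservation at $v^+$, to every feasible flow, so the two feasible sets coincide outright. Your route through an integral maximum flow and the waypoint cut also works. Read the cut as an $S$--$T$ partition, since closed corridors are now present with capacity zero.

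The real difference is the completion step. You note that every diagonal candidate is drawn regardless of availability, so the faces are already bounded and nothing needs to be added. That is true, but your face description is incomplete. The diamond with vertices $z_{2h-1,2k+1},z_{2h,2k},z_{2h+1,2k+1},z_{2h,2k+2}$ contains no straight arc. After splitting it is a hexagon bounded by four diagonals and the gates at $z_{2h,2k}$ and $z_{2h,2k+2}$. Your optional ``vertical connectors joining $v^+$ to $w^-$'' is self-contradictory, since a vertical connector joins like-signed split vertices; simply drop it.

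The paper instead adds upward zero-capacity chains at every even layer, one through consecutive $v^-$ vertices and one through consecutive $v^+$ vertices. These cut each such hexagon into a rectangle and two triangles. For the lemma as stated, your observation is more economical. The paper's chains pay off afterwards. In Lemma~\ref{lem:boundary}, the chains at layers $0$ and $2N$ are the terminal chains that close the finite band and forbid dual exits around the inlets and outlets. The explicit rectangle/triangle/trapezoid cell list is what identifies the attachment faces and the boundary collar. If you reuse your version downstream, you must add at least those two terminal chains.
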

\begin{IEEEproof}
Every corridor flow is already at most one: it leaves a waypoint of total capacity one and enters another such waypoint. Replacing the larger corridor capacities of Section~\ref{sec:finite} by the capacities in~\eqref{eq:boundedcaps} therefore leaves the feasible flows unchanged. Retained zero-capacity arcs carry no flow.

Planar vertex splitting is not valid for arbitrary directed embeddings~\cite{kaplan2011}. Here all incoming arcs approach from the left and all outgoing arcs leave to the right; the two groups are consecutive in the cyclic order around each waypoint. Choose $0<\sigma<d/4$, place $v^\pm=v\pm\sigma e_1$, and join the corresponding split endpoints in the same noncrossing order. The horizontal arc $v^-\to v^+$ separates the two groups without introducing a crossing. This is an auxiliary drawing rather than a change to the flight centerlines used to define openness.

For an explicit cell completion, at each even longitudinal layer $i=2h$ join consecutive $v^-$ vertices vertically, and do the same for the $v^+$ vertices. Give these auxiliary arcs capacity zero and orient them upward. No corridor skips an even layer, so these vertical segments meet existing arcs only at endpoints. Keep them in every finite band whenever both endpoints are retained. The completed graph is periodic under
\begin{equation}
\Lambda=2d\Z\times2b\Z,\label{eq:lattice}
\end{equation}
and has finitely many bounded polygonal cell types. The added arcs have no effect on any feasible flow.
\end{IEEEproof}
The gate capacities are deterministic, and each random corridor capacity is determined by tree centers in a uniformly bounded neighborhood of its physical segment. Thus the capacity field of the completed graph is stationary and finite-range dependent under~\eqref{eq:lattice}. This conclusion holds even though adjacent corridor capacities are dependent.

\subsection{Asymmetric Duality and the Slab Boundary}
Let $f_L(e)$ and $f_R(e)$ be the faces to the left and right of a directed auxiliary arc $e$. Introduce both geometric dual crossing directions, with lengths
\begin{equation}
\ell^*(f_L(e),f_R(e))=c_e,\quad
\ell^*(f_R(e),f_L(e))=0.\label{eq:duallengths}
\end{equation}
An outgoing cut arc is charged its capacity, whereas an incoming cut arc is not charged. This is the nonnegative-capacity form of directed planar flow duality~\cite{hassin1981,erickson2010,zenklusen2010}. The zero-length reverse crossings are essential: a directed cut can contain incoming arcs as well as outgoing ones. An outgoing-capacity sum does not charge both.

Choose one dual vertex in each cell in a translation-equivariant manner. In the infinite completed dual, let
\begin{equation}
T^*(p,q)=\inf_{\pi:p\to q}\sum_{e^*\in\pi}\ell^*(e^*)\label{eq:passage}
\end{equation}
over finite dual paths. The bidirected geometry makes $T^*$ finite. Periodicity and the bounded cell types imply constants $c_g,r_g<\infty$ such that a straight segment can be followed within distance $r_g$ by a dual path of at most $c_g(|p-q|+1)$ arcs. Consequently,
\begin{equation}
0\le T^*(p,q)\le c_g(|p-q|+1).\label{eq:routing}
\end{equation}
Indeed, a slightly perturbed segment crosses only $O(|p-q|+1)$ cells, since the embedding has uniformly bounded local complexity. Visiting those cells in order gives the stated path. Passage time need not be symmetric or strictly positive.

For a finite band with $n\ge2$, the zero-capacity chains through the inlet $v^-$ vertices and the outlet $v^+$ vertices form the left and right boundary chains. Their vertices have fixed source and sink labels, respectively. Split the remaining exterior face into an upper and a lower port; do not permit dual exits through the terminal chains. Write $D_{N,n}$ for the dual graph of bounded faces, and $\mathcal A_{N,n},\mathcal B_{N,n}$ for its vertices adjacent to the upper and lower boundary arcs. Let $T^*_{N,n}$ denote passage time restricted to $D_{N,n}$.

\begin{lemma}[cut paths with bounded end corrections]\label{lem:boundary}
For the completed auxiliary graph,
\begin{equation}
0\le K_{N,n}-\min_{\substack{p\in\mathcal A_{N,n}\\q\in\mathcal B_{N,n}}}
T^*_{N,n}(p,q)\le2.\label{eq:boundary}
\end{equation}
Every bounded face of the finite graph is a face of the infinite completed graph. Its upper and lower attachment vertices are within a fixed distance of the corresponding horizontal sides of $[0,L]\times[0,W]$. Away from a fixed-width boundary collar, the finite and infinite duals coincide. Interior dual paths can be joined to nearby upper or lower attachments with bounded cost when their endpoints have bounded distance from those sides and stay away from the lateral sides.
\end{lemma}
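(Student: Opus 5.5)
The plan is to derive \eqref{eq:boundary} from directed planar max-flow/min-cut duality applied to the finite completed auxiliary graph of Lemma~\ref{lem:planar}. By that lemma, $K_{N,n}$ equals the maximum flow of \eqref{eq:boundedcaps} from the left terminal chain to the right one. I would first add a super source joined to the left chain and a super sink joined to the right chain, both with infinite capacity. These terminal edges are drawn in the exterior face, so the embedding stays planar. The standard Hassin-type duality then says that the minimum $s$--$t$ cut equals the shortest dual path from the upper exterior port to the lower exterior port, with lengths \eqref{eq:duallengths}: an arc crossed with the source side on its tail is charged $c_e$, and one crossed the other way is charged $0$. Forbidding dual exits through the terminal chains is what encodes their infinite effective capacity. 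Dual paths therefore run from the upper port to the lower port through bounded faces only.

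Next I would compare port-to-port passage with the restricted passage $T^*_{N,n}(p,q)$, $p\in\mathcal A_{N,n}$, $q\in\mathcal B_{N,n}$. A dual path from the upper port to the lower port consists of three pieces: a first crossing of an upper boundary arc into some $p\in\mathcal A_{N,n}$, an interior segment in $D_{N,n}$, and a final crossing of a lower boundary arc out of some $q\in\mathcal B_{N,n}$. Hence the port-to-port minimum equals $\min_{p,q}T^*_{N,n}(p,q)$ plus the costs of the two boundary crossings. Each boundary crossing costs between $0$ and the capacity of the arc crossed, which is at most one by \eqref{eq:boundedcaps}. This gives $\min_{p,q} T^*_{N,n}(p,q)\le K_{N,n}\le \min_{p,q}T^*_{N,n}(p,q)+2$.

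To get the lower bound exactly, I would check one point carefully. The boundary arcs are the extreme rows of the retained graph: diagonal arcs from the top and bottom rows, together with the zero-capacity vertical gate chains. It is possible that every upper boundary crossing has zero length, for example when a zero-capacity vertical arc is available. Either way the two corrections lie in $[0,2]$, and this is the main point to verify. The obstacle is making the face bookkeeping correct. One must confirm that the exterior face, split at the two terminal chains, yields exactly the two ports, and that an unbounded face never appears as an interior dual vertex. For $n\ge2$ the chains have at least two vertices, so the split is well defined.

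For the geometric remarks, the argument runs as follows. Every bounded face of the finite graph is bounded by arcs of the periodic infinite completion. The retained band is a union of whole cells except along the collar, so each such face is a face of the infinite graph. Attachment vertices lie in cells touching the top or bottom row, whose diameter is uniformly bounded; they are therefore within a fixed distance of the horizontal sides. Away from a collar of width $O(\max\text{cell diameter})$ no cell is cut, so the finite and infinite duals coincide there. Finally, to join an interior dual path to a nearby attachment, I would apply the routing bound \eqref{eq:routing} along a short segment that stays away from the lateral sides. Its cost is $c_g(\text{dist}+1)$, which is bounded whenever the distance is bounded.
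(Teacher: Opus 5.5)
Your derivation of the two-sided bound is the paper's argument. Directed planar duality identifies $K_{N,n}$ with the minimum port-to-port cost. Stripping or adding the first and last crossings then gives the correction of at most $2$, since each has length in $[0,1]$ because every capacity in \eqref{eq:boundedcaps} is $0$ or $1$.

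Two details there need correcting, though neither affects the bound. First, your description of the boundary is off. The lower side consists of gates and straight corridors at height $b/2$; row $-1$ is not retained, so no diagonals lie on it. The upper side consists of even-layer gates alternating with diagonal ``tents'' through the odd-layer waypoints of row $2n-1$. The only vertical arcs on the exterior are the two terminal chains, which dual paths may not cross at all, so no boundary crossing goes through a vertical arc. Second, your infinite-capacity fans from the super source and super sink create new bounded triangles beside the chains. So the Hassin dual of your augmented graph is not literally $D_{N,n}$ with two ports. You can fix this in either of two ways:
\begin{itemize}
\item contract each terminal chain to a single vertex, which deletes exactly the forbidden chain crossings and puts both terminals on the outer face; or
\item note that inside a fan only northward moves have finite cost, and these are already free along the rectangle columns at $h=0$ and $h=N$.
\end{itemize}

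The genuine gap is in the geometric assertions, which later proofs rely on. For instance, ``restricted passage cost is no smaller than unrestricted passage cost'' needs $D_{N,n}$ to be a subgraph of the infinite dual with unchanged lengths. A bounded face of a subgraph is in general a union of several cells of the full graph. To see that it is a single cell, you must check that no deleted vertex or arc of the completion lies in the region enclosed by the finite outer boundary. Saying the band is ``a union of whole cells except along the collar'' assumes this. The collar is exactly where it needs checking, because the upper boundary is not straight.

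The paper settles this by listing the retained cells:
\begin{itemize}
\item rectangles for $0\le h\le N$, $0\le k\le n-2$;
\item the two triangles and two trapezoids for $0\le h<N$, $0\le k\le n-2$;
\item the lower trapezoids at $k=n-1$ under the tents.
\end{itemize}
The same list identifies $\mathcal A_{N,n}$, $\mathcal B_{N,n}$, and the collar.

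Your joining step has a related flaw. The bound \eqref{eq:routing} yields a path in the infinite dual staying within $r_g$ of the segment. Next to the upper or lower side, such a path can pass through cells outside the band, so it bounds $T^*$ but not $T^*_{N,n}$. Instead, use a perturbed vertical segment lying inside the band and away from the terminal chains. It crosses boundedly many retained cells, and each dual crossing costs at most one.
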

\begin{IEEEproof}
All potential corridors, even zero-capacity ones, are retained. To describe the cells and boundary explicitly, let $y_k=b(2k+\tfrac12)$ and write
\begin{equation*}
\begin{aligned}
E_{h,k}^{\pm}&=(2hd\pm\sigma,y_k),\\
O_{h,k}^{\pm}&=((2h+1)d\pm\sigma,y_k+b)
\end{aligned}
\end{equation*}
for the split even- and odd-layer waypoints. Take the auxiliary corridors to be straight segments between their split endpoints. At each even layer, the completion arcs and two consecutive gates bound a rectangle with cyclic vertex list
\begin{equation*}
(E_{h,k}^-,E_{h,k}^+,E_{h,k+1}^+,E_{h,k+1}^-).
\end{equation*}
Between consecutive even layers, each region between heights $y_k$ and $y_{k+1}$ has two triangular faces and two trapezoidal faces. Their cyclic vertex lists are
\begin{equation*}
\begin{aligned}
&(E_{h,k}^+,O_{h,k}^-,E_{h,k+1}^+),\\
&(E_{h+1,k}^-,E_{h+1,k+1}^-,O_{h,k}^+),\\
&(E_{h,k}^+,E_{h+1,k}^-,O_{h,k}^+,O_{h,k}^-),\\
&(E_{h,k+1}^+,O_{h,k}^-,O_{h,k}^+,E_{h+1,k+1}^-).
\end{aligned}
\end{equation*}
In the finite band, retain the rectangles for $0\le h\le N$, $0\le k\le n-2$, and all four intervening faces for $0\le h<N$, $0\le k\le n-2$. In addition, retain the lower trapezoid, the third face in the list, at $k=n-1$ for each $0\le h<N$. These are all the bounded faces of the finite graph. Each is an unchanged face of the infinite completed graph.

The left terminal chain joins $E_{0,k}^-$ to $E_{0,k+1}^-$, and the right terminal chain joins $E_{N,k}^+$ to $E_{N,k+1}^+$, for $0\le k\le n-2$. The lower boundary alternates gates and straight corridors at height $y_0=b/2$. The upper boundary alternates even-layer gates with the segments
\begin{equation*}
E_{h,n-1}^+\longrightarrow O_{h,n-1}^-
\longrightarrow O_{h,n-1}^+\longrightarrow E_{h+1,n-1}^-,
\end{equation*}
so its height lies between $W-3b/2$ and $W-b/2$. The upper attachment faces are the rectangles at $k=n-2$ and the lower trapezoids at $k=n-1$; the lower attachment faces are the rectangles and lower trapezoids at $k=0$. Thus both attachment sets stay within a fixed distance of the horizontal sides. The upper and lower ports adjoin only the respective boundary arcs; crossings through the two terminal chains are excluded.

The cell list also identifies a fixed-width collar outside which the finite and infinite duals coincide. A vertical segment from an interior point at bounded distance from a horizontal side crosses only a bounded number of cells. Perturbing it to avoid primal vertices, and keeping it away from the terminal chains, gives a bounded-cost route to the corresponding attachment set.

Every source--sink cut has a boundary component running from the upper to the lower port. Traverse it with the source side on the right. Outgoing primal crossings have length $c_e$ under~\eqref{eq:duallengths}, and incoming crossings have length zero. Removing other boundary components and erasing loops cannot increase cost. Conversely, a simple port-to-port dual path separates the two terminal chains and charges exactly the outgoing capacity of its associated cut. Thus $K_{N,n}$ equals the minimum port-to-port passage cost. Deleting the first and last dual crossing leaves a path in $D_{N,n}$. Each deleted arc has length at most one. Conversely, any path between the stated attachments extends to a port-to-port path by two such arcs. This proves~\eqref{eq:boundary}.
\end{IEEEproof}
The end correction concerns two arcs, not a modification of an entire lateral boundary. The explicit terminal chains exclude an artificial exterior bypass. These details allow the infinite passage model to be used without changing the finite flight capacity.

\subsection{A Uniform Asymmetric Shape Theorem}
Stationary first-passage percolation and random semimetrics provide the relevant background~\cite{boivin1990,haggstrom1995,bjorklund2010}. Here the deterministic routing bound gives a direct proof, including moving endpoints. In particular, the argument does not require a positive time constant.

\begin{theorem}[uniform asymmetric dual shape]\label{thm:shape}
For the completed auxiliary graph, there is a deterministic, finite, nonnegative, Lipschitz continuous, convex, positively homogeneous function $\mu^\nu:\R^2\to[0,\infty)$ such that, on a common probability-one event, for every compact $C\subset\R^2$,
\begin{equation}
\lim_{r\to\infty}\tfrac{1}{r}
\sup_{p,q\in V^*\cap rC}
\bigl|T^*(p,q)-\mu^\nu(q-p)\bigr|=0.\label{eq:shape}
\end{equation}
Here $V^*$ is the infinite dual vertex set. The function may be asymmetric or vanish in some directions.
\end{theorem}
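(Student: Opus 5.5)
The plan is to prove the theorem by the standard subadditive route for stationary first-passage models, with the deterministic routing bound \eqref{eq:routing} playing the role usually taken by moment assumptions and by positivity of the time constant.

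\emph{Subadditivity and the directional limit.} The dual passage time satisfies the triangle inequality $T^*(p,q)\le T^*(p,w)+T^*(w,q)$ by concatenating dual paths. Its law is invariant under the lattice $\Lambda$ of \eqref{eq:lattice}, because the capacity field is stationary under $\Lambda$ and the dual vertices were chosen equivariantly. Fix $z\in\Lambda$ and let $p_0$ be a fixed dual vertex. Set $X_{m,n}=T^*(p_0+mz,p_0+nz)$. This array is subadditive and stationary under the shift by $z$, and $0\le X_{0,n}\le c_g(n|z|+1)$ by \eqref{eq:routing}. The forest is mixing under every nonzero translation, hence ergodic under the shift by $z$. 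Kingman's theorem~\cite{kingman1973} therefore gives an almost-sure deterministic limit $\mu(z)=\lim_n X_{0,n}/n$, with $\mu(z)\in[0,c_g|z|]$.

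\emph{Properties of $\mu$.} Subadditivity gives $\mu(kz)=k\mu(z)$ and $\mu(z+z')\le\mu(z)+\mu(z')$ on $\Lambda$. To see the second, route $p_0\to p_0+nz\to p_0+n(z+z')$ and use stationarity together with the almost-sure limits along a subsequence. I would rather obtain it in expectation, applying Kingman's $L^1$ statement to $\E T^*$, which is subadditive. Extend $\mu$ to $\Lambda\otimes\mathbb Q$ by homogeneity. Then $\mu$ is subadditive, and combining subadditivity with $\mu(z)\le c_g|z|$ gives $|\mu(z)-\mu(z')|\le c_g|z-z'|$. It therefore extends uniquely to a Lipschitz, positively homogeneous, subadditive function on $\R^2$. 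Subadditivity plus homogeneity is exactly convexity. Asymmetry is allowed, since $T^*$ is not symmetric.

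\emph{Uniformity over moving endpoints (the main obstacle).} We need uniformity both in direction and in the base point, since both $p$ and $q$ range over $rC$. This is where I expect the real work. The plan is as follows.

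First fix a finite $\epsilon$-net of rational directions $z_1,\dots,z_J$ and a finite grid of base points $x_1,\dots,x_I\in\Lambda\otimes\mathbb Q$ in a neighborhood of $C$.

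Next, for each pair, I want $\frac1r T^*(rx_i,rx_i+rtz_j)\to t\mu(z_j)$ uniformly in $t\in[0,1]$. Pointwise in $t$, subadditivity plus Kingman handles the fixed base point $0$. Uniformity in $t$ then follows by monotone interpolation on a finite mesh of $t$, using \eqref{eq:routing} to move the endpoint by $o(r)$.

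Moving the base point to $rx_i$ is the delicate part, because Kingman only controls paths starting near the origin. I would handle it with the ergodic theorem applied to the shifted sequence. For fixed $x$ and $z$, the differences $T^*(0,r(x+z))-T^*(0,rx)$ are not directly controlled, but subadditivity gives upper bounds. For the lower bound I would use a Borel--Cantelli-free trick: convergence in probability of $\frac1r T^*(rx,rx+rz)$ follows from stationarity, since its law equals that of $\frac1r T^*(0,rz)$ up to the bounded rounding error of \eqref{eq:routing}. Combining this with the almost-sure upper bounds from the triangle inequality $T^*(0,r(x+z))\le T^*(0,rx)+T^*(rx,r(x+z))$, and with equality of limits $\mu(x+z)$ versus $\mu(x)+\mu(z)$ along directions where the triangle is degenerate, is not sufficient in general.

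I would therefore fall back on the standard argument of Boivin~\cite{boivin1990} and Bj\"orklund~\cite{bjorklund2010}. The bounded increments, $|T^*(p,q)-T^*(p',q')|\le c_g(|p-p'|+|q-q'|+2)$ by \eqref{eq:routing}, give equicontinuity of the rescaled functions $(x,y)\mapsto\frac1r T^*(rx,ry)$ on compacts. This reduces uniform convergence to almost-sure convergence at countably many rational pairs $(x,y)$. For each such pair, I would prove convergence by a multiparameter ergodic argument. The first step is to cover the segment from $rx$ to $ry$ by $M$ blocks of length $r/M$ starting at lattice points. Then, by the multiparameter ergodic theorem for the $\Lambda$-action, most starting points $w$ in a box of size $r$ satisfy $|T^*(w,w+sz)-s\mu(z)|\le\epsilon s$ for all $s\ge s_0$. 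Perturbing each block to such a good starting point at cost $O(\epsilon r)$ via \eqref{eq:routing} gives the upper bound.

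The matching lower bound follows by the same good-point argument applied to $T^*(0,ry)\le T^*(0,rx)+T^*(rx,ry)$, together with the already-established limits from the origin, $\frac1rT^*(0,rx)\to\mu(x)$. This yields $\liminf\frac1r T^*(rx,ry)\ge\mu(y)-\mu(x)$, which is insufficient when $\mu(y-x)>\mu(y)-\mu(x)$. To close that gap I would instead run the good-point argument along the concatenation $rx\to ry\to r x'$ with $x'$ chosen far in direction $y-x$, and use convexity to squeeze.

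If this squeeze fails, the alternative is to cite the moving-endpoint shape theorem for stationary bounded-increment subadditive cocycles in~\cite{bjorklund2010}. Its hypotheses are verified by the stationarity, ergodicity, and linear deterministic bound \eqref{eq:routing} established above. Taking the countable intersection of the probability-one events then completes \eqref{eq:shape}.
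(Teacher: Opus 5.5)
Your construction of $\mu$ matches the paper's and is fine: Kingman along lattice rays using the deterministic bound \eqref{eq:routing}, subadditivity through expectations, the Lipschitz estimate, homogeneous extension, and convexity. The genuine gap is the lower bound in the moving-endpoint step, which your proposal leaves unresolved.

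Concatenating blocks that start at good points can only bound $T^*(rx,ry)$ from above, because $T^*$ is subadditive. The squeeze along $rx\to ry\to rx'$ is circular. From $T^*(rx,rx')\le T^*(rx,ry)+T^*(ry,rx')$ you would need a \emph{lower} bound on $T^*(rx,rx')$, which is again a passage time from the moving base point $rx$. If you route through the fixed origin instead, you only get $\liminf_{r}\frac1r T^*(rx,ry)\ge\mu(x')-\mu(x)-\mu(x'-y)$. Subadditivity of $\mu$ shows this never exceeds $\mu(y)-\mu(x)$, the bound you already found insufficient, so convexity cannot rescue it. The final fallback is a citation rather than an argument. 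You have also not checked that the cited results, formulated for semimetrics, apply to a directed, asymmetric, possibly degenerate $T^*$.

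The missing observation is that the lower bound needs no concatenation at all. The endpoint estimate \eqref{eq:endpoints} is two-sided, and your notion of a good starting point is already two-sided. Take a single good $w\in o+\Lambda$ within $\eta r$ of $rx$; the multiparameter ergodic theorem provides one once the probability of being bad is small compared with $\eta^2$. Then $|T^*(w,w+r(y-x))-r\mu(y-x)|=O(\epsilon r+1)$, and \eqref{eq:endpoints} transfers \emph{both} inequalities to the pair near $(rx,ry)$ at cost $O(\eta r+1)$.

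The paper does exactly this, in a form that avoids the reduction to rational pairs:
\begin{itemize}
\item It first upgrades the fixed-origin limits to uniformity over all directions, \eqref{eq:fixedorigin}, using a finite net of lattice directions together with \eqref{eq:endpoints} and the Lipschitz bound on $\mu$.
\item It calls an origin good if the two-sided bound \eqref{eq:good} holds for every endpoint at distance at least $M$.
\item The spatial ergodic theorem shows that every ball of radius $\eta_0 r$ centered in $rC$ contains a good origin. This is because the number of bad origins in $rQ$ is eventually smaller than the number of lattice points in any such ball.
\item One more application of \eqref{eq:endpoints} then yields \eqref{eq:shape}.
\end{itemize}
With the blocks replaced by one good point, your equicontinuity reduction to countably many rational pairs would also go through.
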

\begin{IEEEproof}
Suppress $\nu$ in $\mu$. We first record the endpoint estimate. The directed triangle inequality and~\eqref{eq:routing}, applied in both orders, give
\begin{align}
|T^*(p,q)-T^*(p',q')|
\le c_g\bigl(|p-p'|+|q-q'|+2\bigr).\label{eq:endpoints}
\end{align}
Thus changing either endpoint by a small fraction of the observation scale changes passage time by at most the same order.

Fix a dual vertex $o$. For each nonzero $z\in\Lambda$, the array
\begin{equation}
X_{m,n}=T^*(o+mz,o+nz),\quad m<n,\label{eq:directionarray}
\end{equation}
is stationary and subadditive. Translation by $z$ is ergodic for the Poisson process, and~\eqref{eq:routing} gives integrability. Kingman's theorem therefore yields a deterministic limit
\begin{equation}
g(z)=\lim_{k\to\infty}\frac{T^*(o,o+kz)}k\label{eq:directionlimit}
\end{equation}
almost surely and in $L^1$. Set $g(0)=0$. Countability gives these limits simultaneously for all $z\in\Lambda$. Taking subsequences proves $g(kz)=kg(z)$ for positive integers $k$. Moreover,
\begin{align}
\E T^*(o,o+k(z+w))
&\le\E T^*(o,o+kz)\notag\\
&\quad+\E T^*(o+kz,o+k(z+w))\notag\\
&=\E T^*(o,o+kz)+\E T^*(o,o+kw).
\label{eq:limittriangle}
\end{align}
Divide by $k$ and pass to the limit to obtain $g(z+w)\le g(z)+g(w)$. Since $0\le g(z)\le c_g|z|$, applying subadditivity in both orders gives
\begin{equation}
|g(z)-g(w)|\le c_g|z-w|.\label{eq:glipschitz}
\end{equation}
Integer homogeneity extends $g$ consistently to rational multiples of $\Lambda$. Equation~\eqref{eq:glipschitz} then gives a unique Lipschitz extension $\mu$ to $\R^2$. It is nonnegative, subadditive, and positively homogeneous, hence convex.

Next, directional convergence is uniform over distant endpoints from $o$. For any angular tolerance, choose finitely many nonzero lattice vectors whose directions form a net of the unit circle. A sufficiently distant point in $o+\Lambda$ can be approximated by a positive integer multiple of one of these vectors, with error at most the angular tolerance times its distance from $o$, plus a constant depending on the net. Apply~\eqref{eq:directionlimit} on these finitely many rays, and use~\eqref{eq:endpoints} and the Lipschitz bound for $\mu$ to control the approximation. The periodic dual has finitely many vertex types, so every $q\in V^*$ is within a bounded distance of $o+\Lambda$. Another use of~\eqref{eq:endpoints} yields
\begin{equation}
\lim_{M\to\infty}\sup_{\substack{q\in V^*\\|q-o|\ge M}}
\frac{|T^*(o,q)-\mu(q-o)|}{|q-o|}=0
\quad\text{almost surely}.\label{eq:fixedorigin}
\end{equation}

It remains to make the origin move. Fix $\varepsilon>0$. An origin $x\in o+\Lambda$ is called good at scale $M$ if
\begin{equation}
|T^*(x,q)-\mu(q-x)|\le\varepsilon|q-x|
\quad\text{for all }q\in V^*,\ |q-x|\ge M.\label{eq:good}
\end{equation}
By~\eqref{eq:fixedorigin}, the probability $\beta_M$ that $o$ is not good tends to zero as $M\to\infty$.

Fix a compact set $C$, a box $Q$ containing its unit neighborhood, and $0<\eta_0<1$. Every ball of radius $\eta_0r$ centered in $rC$ contains at least $a_1\eta_0^2r^2$ points of $o+\Lambda$ for all sufficiently large $r$, where $a_1>0$ depends only on the lattice. Choose $M$ so large that
\begin{equation}
\beta_M\frac{|Q|}{4db}<\frac{a_1\eta_0^2}{2}.\label{eq:baddensitychoice}
\end{equation}
The spatial ergodic theorem for the lattice action~\cite{krengel1985} implies that the number of bad origins in $rQ$, divided by $r^2$, converges almost surely to $\beta_M|Q|/(4db)$. Thus, eventually, no radius-$\eta_0r$ ball centered in $rC$ can consist entirely of bad origins. Every $p\in V^*\cap rC$ is within $\eta_0r$ of a good origin $x$.

For $p,q\in V^*\cap rC$ with $|q-x|\ge M$,~\eqref{eq:endpoints} and~\eqref{eq:good} give
\begin{align}
|T^*(p,q)-\mu(q-p)|
&\le2c_g\eta_0r+2c_g\notag\\
&\quad+\varepsilon(\diam(C)+\eta_0)r.\label{eq:movingbound}
\end{align}
If $|q-x|<M$, use~\eqref{eq:routing} instead to bound the same expression by $2c_g\eta_0r+O(M+1)$. Divide by $r$ and let $r\to\infty$, then let $\eta_0$ and $\varepsilon$ decrease to zero. A countable choice of tolerances and enclosing boxes gives~\eqref{eq:shape} on one probability-one event for all compact sets.
\end{IEEEproof}
The spatial ergodic theorem controls the density of good origins without requiring independence of their defining events, which depend on the entire forest. Together with the deterministic endpoint estimate, this density bound extends the passage-time approximation from good origins to nearby points.

\subsection{Reflection Symmetry and the Vertical Direction}
The corridor library distinguishes forward from backward motion, but treats the two lateral directions alike. In the dual this gives horizontal evenness, rather than full symmetry.
\begin{lemma}[horizontal evenness]\label{lem:symmetry}
The dual time constant satisfies
\begin{equation}
\mu^\nu(x,y)=\mu^\nu(-x,y).\label{eq:symmetry}
\end{equation}
Consequently, for every real $\xi$,
\begin{equation}
\mu^\nu(\xi,-1)\ge\mu^\nu(0,-1).\label{eq:verticalmin}
\end{equation}
\end{lemma}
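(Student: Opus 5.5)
The plan is to derive \eqref{eq:symmetry} from a distributional symmetry of the whole construction under the lateral reflection $\rho(x_1,x_2)=(x_1,-x_2)$ of the physical plane. Composed with a suitable lattice translation, $\rho$ maps the waypoint set \eqref{eq:waypoints} to itself. Indeed, $j\mapsto -1-j$ sends $b(j+\tfrac12)$ to $-b(j+\tfrac12)$, and the parity of $i+j$ is preserved up to a fixed shift; if needed, we compose with a transverse shift by $b$ together with a longitudinal shift by $d$. The map sends straight arcs to straight arcs and exchanges the diagonals $(i,j)\to(i+1,j\pm1)$, and forward orientation is preserved. The Poisson process is invariant in law under $\rho$ and under translations, and openness \eqref{eq:open} is defined isometrically. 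Hence the reflected capacity field has the same law as the original one, up to a translation. The axis conventions must be matched with the dual. In the paper's notation $\mu^\nu(x,y)$ with the vertical (cut) direction $y$, the reflection acting on the dual should be the one reversing the coordinate that corresponds to the transverse (lateral) direction of the flight... I would first check carefully which coordinate of $\mu$'s argument is the physical lateral one. The dual passage from the upper to the lower port is the cut direction, so in \eqref{eq:verticalmin} the second coordinate is the cut-traversal direction. Transverse-to-the-cut is then the first coordinate, and the reflection that preserves the flight orientation must fix the cut direction and flip the other. I would set up the dual coordinates so that this is the case, namely the map $(x,y)\mapsto(-x,y)$.

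The main obstacle is orientation. A reflection of the plane reverses the cyclic orientation, so it exchanges the left and right faces $f_L(e)$ and $f_R(e)$ of each directed primal arc. By \eqref{eq:duallengths}, this could swap which dual crossing is charged $c_e$. I would handle this by checking that the reflection preserves primal arc directions (forward stays forward) while reversing the plane orientation. The combined effect on a dual crossing from $f_L$ to $f_R$ is then to map it to a crossing from the image of $f_L$, which is now the right face, to the image of $f_R$. If this reverses charging, I would compose with the reversal of dual paths, i.e. use $T^*(p,q)$ versus $T^*(\rho q,\rho p)$. This yields $\mu(z)=\mu(-\rho z)$, where $-\rho$ is the reflection in the other axis. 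One of these two identities is exactly \eqref{eq:symmetry} in the paper's coordinates, and I would verify which one on a single gate arc $v^-\to v^+$.

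The transfer to $\mu$ runs as follows. Equality in law of the fields gives $\E T^*(o,o+kz)=\E T^*(o',o'+k\rho' z)$ for the appropriate reflected vector $\rho'z$, where $o'$ is the reflected dual base vertex. The endpoint estimate \eqref{eq:endpoints} absorbs the bounded shift between $o'$ and a translate of $o$. Dividing by $k$ and using \eqref{eq:directionlimit} in $L^1$ gives $g(z)=g(\rho' z)$ for lattice $z$. Homogeneity and Lipschitz continuity then extend this identity to $\mu$ on $\R^2$.

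Finally, \eqref{eq:verticalmin} follows from convexity. We have
\[
(0,-1)=\tfrac12(\xi,-1)+\tfrac12(-\xi,-1),
\]
so convexity and \eqref{eq:symmetry} give
\[
\mu^\nu(0,-1)\le\tfrac12\mu^\nu(\xi,-1)+\tfrac12\mu^\nu(-\xi,-1)=\mu^\nu(\xi,-1).
\]
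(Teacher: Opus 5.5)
Your argument is essentially the paper's: reflect laterally via $S(x,y)=(x,b-y)$, which preserves the forest law, the corridor library and the gates but swaps $f_L(e)$ and $f_R(e)$, so reflecting and reversing dual paths gives $T^*_{S\omega}(Sq,Sp)=T^*_\omega(p,q)$ and hence $\mu^\nu(x,y)=\mu^\nu(-x,y)$ with $x$ longitudinal, followed by the same convexity step (your expectation-based transfer through the Kingman limits on $\Lambda$ is an equally valid substitute for the paper's appeal to the almost-sure shape theorem). Your hedges all resolve one way: the reflection itself flips the cut coordinate and the path reversal is forced, so the outcome is $\mu(z)=\mu(-\rho z)$, and the alternative $\mu(x,y)=\mu(x,-y)$ is false in general because an upward dual path crosses every forward arc from its right face to its left face at zero cost; the needed translation is the transverse shift by $b$ alone, since adding a longitudinal shift by $d$ would break waypoint parity and move straight arcs onto odd layers; and the reflected completion arcs reverse orientation, which is harmless because both of their dual crossings have length zero.
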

\begin{IEEEproof}
The reflection $S(x,y)=(x,b-y)$ sends $z_{i,j}$ to $z_{i,-j}$ and preserves the corridor library and the Poisson law. It also preserves the split-vertex construction. It reverses the orientation of the vertical completion arcs, but both dual crossing directions of those zero-capacity arcs have length zero.

Reflection interchanges the left and right faces of each primal corridor or gate. Therefore, reflecting and reversing a dual path preserves its passage cost in the reflected forest. With $S$ also denoting the induced map on dual faces,
\begin{equation}
T^*_{S\omega}(Sq,Sp)=T^*_\omega(p,q).\label{eq:reflectpassage}
\end{equation}
Dual representatives can be chosen compatibly with reflection; any bounded adjustment of representatives has no effect on the time constant. If $q-p=(x,y)$, the displacement from $Sq$ to $Sp$ is $(-x,y)$. Apply Theorem~\ref{thm:shape} and use the deterministic nature of $\mu^\nu$ to obtain~\eqref{eq:symmetry}. Convexity now gives
\begin{equation*}
\mu^\nu(0,-1)\le\tfrac12\bigl(\mu^\nu(\xi,-1)+\mu^\nu(-\xi,-1)\bigr)
=\mu^\nu(\xi,-1).\tag*{\IEEEQEDhere}
\end{equation*}
\end{IEEEproof}

\subsection{Confined Vertical Blocks and the Iterated Limit}
An unrestricted near-optimal dual path can leave a finite slab. The following argument provides confinement instead of presuming it. Only vertical blocks are needed because of Lemma~\ref{lem:symmetry}.
\begin{proposition}[identification of the iterated current]\label{prop:identify}
For every fixed speed and admissible geometry,
\begin{equation}
j_\infty^\nu=\frac{\mu^\nu(0,-1)}H.\label{eq:identification}
\end{equation}
\end{proposition}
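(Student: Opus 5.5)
The plan is to prove $\kappa_\infty^\nu=2b\,\mu^\nu(0,-1)$, which is equivalent to the statement because $j_\infty^\nu=\kappa_\infty^\nu/(2bH)$. Write $z=(0,-2b)\in\Lambda$. By homogeneity $g(z)=\mu^\nu(z)=2b\,\mu^\nu(0,-1)$, so one transverse period of cut costs $g(z)$ in the limit. I would prove the two inequalities separately. The lower bound is where the uniform shape theorem and reflection enter. The upper bound is where confinement enters; there I would use the order of limits (width first, then depth, then block size) rather than any control on where optimal dual paths travel.

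\emph{Lower bound.} Fix $N$, and apply Theorem~\ref{thm:shape} with $r=W_n$ and the compact set $C=[0,1]^2$, enlarged by a bounded collar. For $n\ge N/(2b)\cdot(2d)$ the slab $[0,L]\times[0,W_n]$ lies in $rC$ for large $n$. Every finite dual path in $D_{N,n}$ is also a path in the infinite completed dual, and restriction can only raise the infimum. Lemma~\ref{lem:boundary} therefore gives
\[
K_{N,n}\ge\min_{p\in\mathcal A_{N,n},\,q\in\mathcal B_{N,n}}T^*(p,q).
\]
Every such pair has $q-p=(\xi,-W_n+O(1))$ with $|\xi|\le L$. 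The uniform shape limit~\eqref{eq:shape} gives $T^*(p,q)\ge\mu^\nu(q-p)-o(W_n)$. Lipschitz continuity together with~\eqref{eq:verticalmin} and homogeneity gives $\mu^\nu(\xi,-W_n)\ge W_n\mu^\nu(0,-1)$. Dividing by $n$ yields $\liminf_n K_{N,n}/n\ge2b\mu^\nu(0,-1)$ almost surely, so $\kappa_N^\nu\ge2b\mu^\nu(0,-1)$ for every $N$. Letting $N\to\infty$ proves $\kappa_\infty^\nu\ge2b\mu^\nu(0,-1)$.

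\emph{Upper bound by confined blocks.} Fix a block height of $\ell$ periods. Fix a depth $N$, let $L=2Nd$, and let $I_N=[L/4,3L/4]$. Let $T_N(\ell)$ be the minimum passage cost, over dual paths confined to the cells meeting $I_N\times\R$, from the representative $o$ to $o+\ell z$, where $o$ is placed near $x_1=L/2$. Step one is a cut construction inside the slab. Stack the translates of these confined block paths by $k\ell z$, join consecutive ones with bounded cost using the routing bound~\eqref{eq:routing}, and attach the ends to $\mathcal A_{N,n},\mathcal B_{N,n}$ as in Lemma~\ref{lem:boundary}. These paths stay in $I_N$, so they avoid the terminal chains and the lateral collar, and the result is an admissible port-to-port dual path in $D_{N,n}$. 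Hence
\[
K_{N,n}\le\sum_{k<n/\ell}T_N(\ell)\circ\theta^{k\ell}+C_1\Bigl(\frac n\ell+1\Bigr).
\]
Dividing by $n$ and using the ergodic theorem in the transverse direction gives $\kappa_N^\nu\le\E T_N(\ell)/\ell+C_1/\ell$. Step two is the limit in depth. The strips $I_N\times\R$ increase to the plane, after recentring by a lattice shift, which stationarity permits. An optimal infinite dual path between two fixed points is finite, so $T_N(\ell)\downarrow T^*(o,o+\ell z)$ almost surely. Domination by $c_g(\ell|z|+1)$ from~\eqref{eq:routing} then gives $\E T_N(\ell)\to\E T^*(o,o+\ell z)$. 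Since $\kappa_N^\nu$ decreases in $N$ by Proposition~\ref{prop:depth}, we obtain $\kappa_\infty^\nu\le\E T^*(o,o+\ell z)/\ell+C_1/\ell$. Step three lets $\ell\to\infty$: the $L^1$ convergence in~\eqref{eq:directionlimit} gives $\kappa_\infty^\nu\le g(z)=2b\mu^\nu(0,-1)$.

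The main obstacle is the confinement in step one, together with its justification in step two. One must check that a dual path confined to the middle strip, once attached to the upper and lower ports, really charges only a separating cut of the slab network. This follows the port-to-port duality in the proof of Lemma~\ref{lem:boundary}, and it is why the paths are kept in $I_N$, away from the terminal chains. One must also check that recentring the strip does not spoil the monotone limit; this uses invariance under $\Lambda$-shifts by multiples of $2d$. The key point is that no positivity of $\mu^\nu$ is needed to force confinement, since the block height $\ell$ is fixed while the depth tends to infinity.
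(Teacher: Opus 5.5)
Your lower bound is the paper's argument. At fixed $N$ the horizontal offset $\xi$ is bounded, so Lipschitz continuity of $\mu^\nu$ already suffices and \eqref{eq:verticalmin} is not needed, though using it is harmless.

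Your upper bound has the paper's architecture: confined vertical blocks stacked in a slab of fixed depth, the transverse law at that depth, removal of the confinement using \eqref{eq:routing} as a dominating bound, depth monotonicity, and then the directional $L^1$ limit. The difference is the confinement device. The paper confines each block to a ball of fixed radius $B$, chooses $m$ and then $B$ so that the block mean is within $\varepsilon$ of the time constant, and only then fixes one depth $N_0$. You instead tie the confinement region to the depth and send $N\to\infty$ at fixed $\ell$. Apart from the point below, that variation works. Two small remarks on it:
\begin{itemize}
\item After recentring, the strips are not nested when $N$ changes parity, so $T_N(\ell)\downarrow$ is not literally monotone. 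Eventual containment of a finite minimizer together with the domination still gives $\E T_N(\ell)\to\E T^*(o,o+\ell z)$.
\item The inter-block joining cost is actually zero, since $\ell z\in\Lambda$ makes consecutive translates concatenate exactly.
\end{itemize}

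One step fails as written. Confinement to the cells meeting $I_N\times\R$ controls only horizontal excursions. A minimizing block path may leave the slab vertically, so the stacked blocks need not lie in $D_{N,n}$. Your assertion that the concatenation is an admissible port-to-port path of $D_{N,n}$ is therefore unjustified: the strip keeps you away from the terminal chains and the lateral collar, but not from the upper and lower boundaries. The paper's balls handle this, since every block ball is also kept a fixed margin away from the horizontal sides.

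You can repair this in either of two ways.
\begin{itemize}
\item Confine each block to the cells meeting $I_N\times[(o)_2-2b\ell-L,\,(o)_2+L]$; these boxes still exhaust the plane after recentring. Run the stack only between heights $L+c$ and $W_n-L-c$ for a fixed collar margin $c$. The two end pieces then cost a constant independent of $n$, which disappears after dividing by $n$ at fixed $N$.
\item Keep the strip. Observe that, within its horizontal range, every infinite-dual face outside $D_{N,n}$ lies above the upper boundary or below the lower one. Every step between such a face and $D_{N,n}$ crosses a boundary arc whose length is the same in both duals. Collapse those faces to the upper and lower ports, and truncate the resulting walk at its last visit to the upper port before its first visit to the lower port. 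This gives a port-to-port walk in $D_{N,n}$ of no greater cost, to which the duality in Lemma~\ref{lem:boundary} applies (after erasing loops).
\end{itemize}
With either repair, your inequality $K_{N,n}\le\sum_k T_N(\ell)\circ\theta^{k\ell}+C_1(n/\ell+1)$ holds, and the rest of your argument goes through.
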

\begin{IEEEproof}
Let $c_0=\mu^\nu(0,-1)$. First fix a depth $N$. The horizontal separation of any upper and lower attachments in Lemma~\ref{lem:boundary} is bounded by a constant depending on $N$, and their vertical separation is $-W_n+O(1)$. Restricted passage cost is no smaller than unrestricted passage cost. Applying Theorem~\ref{thm:shape} to all these endpoints gives
\begin{equation}
\liminf_{n\to\infty}\frac{K_{N,n}}{W_n}\ge c_0.\label{eq:stripshapelower}
\end{equation}
Theorem~\ref{thm:transverse} implies $Hj_N^\nu\ge c_0$. Taking the infimum over $N$ proves $Hj_\infty^\nu\ge c_0$.

For the reverse inequality, let $z=(0,-2b)\in\Lambda$ and fix a dual vertex $o$. For an integer $m\ge1$, let $X_{m,B}$ be the passage time from $o$ to $o+mz$, restricted to dual vertices in the ball of radius $B$ about $o$. For all sufficiently large $B$, the geometric routing bound gives
\begin{equation}
0\le X_{m,B}\le c_g(m|z|+1).\label{eq:blockbound}
\end{equation}
As $B\to\infty$, $X_{m,B}$ decreases to $T^*(o,o+mz)$, since every finite path lies in some such ball. Directional $L^1$ convergence gives
\begin{equation}
\lim_{m\to\infty}\frac{\E T^*(o,o+mz)}{2bm}=c_0.\label{eq:verticalmean}
\end{equation}
Given $\varepsilon>0$, first choose $m$, then $B$, so that
\begin{equation}
\frac{\E X_{m,B}}{2bm}\le c_0+\varepsilon.\label{eq:blockapprox}
\end{equation}
Both remain fixed as the slab width grows.

Choose a geometric margin $\rho$ larger than the sum of the boundary-collar width in Lemma~\ref{lem:boundary}, the routing radius $r_g$, and the maximum dual-edge extent. Fix one depth $N_0$ large enough to contain a vertical chain of radius-$B$ blocks at distance at least $\rho$ from the lateral sides. For each sufficiently large $n$, choose the top block center $p_n\in o+\Lambda$ so that
\begin{equation*}
B+\rho<W_n-(p_n)_2\le B+\rho+2b.
\end{equation*}
Keep its horizontal coordinate fixed and stack translates in direction $mz$. Let $M_n$ be the largest integer for which the final endpoint $q_n=p_n+M_nmz$ has height at least $B+\rho$. Then
\begin{equation*}
B+\rho\le(q_n)_2<B+\rho+2bm.
\end{equation*}
Every block ball therefore stays at distance at least $\rho$ from the horizontal sides as well as the lateral sides. The two endpoint distances from the horizontal sides are bounded independently of $n$. There are
\begin{equation}
M_n=\tfrac{W_n}{2bm}+O(1)\label{eq:blockcount}
\end{equation}
blocks, all contained in the finite dual. Lemma~\ref{lem:boundary} provides bounded-cost end connections.

Let $X_i^{(n)}$ be the translated restricted costs. Concatenating their paths and the end connections yields
\begin{equation}
K_{N_0,n}\le\sum_{i=1}^{M_n}X_i^{(n)}+C_{m,B},\label{eq:blockupper}
\end{equation}
where $C_{m,B}$ is independent of $n$. Stationarity gives $\E X_i^{(n)}=\E X_{m,B}$. Independence is unnecessary. Take expectations, divide by $W_n$, and use Theorem~\ref{thm:transverse} to obtain
\begin{equation}
Hj_{N_0}^\nu\le\frac{\E X_{m,B}}{2bm}\le c_0+\varepsilon.\label{eq:stripshapeupper}
\end{equation}
Since $j_\infty^\nu\le j_{N_0}^\nu$, letting $\varepsilon\downarrow0$ proves~\eqref{eq:identification}.
\end{IEEEproof}
The depth $N_0$ depends on the prescribed accuracy but is fixed before the width limit. Thus the proof identifies the iterated limit without interchanging limits.

\subsection{The Joint-Growth Current Law}
\begin{theorem}[bulk current]\label{thm:bulk}
Let $N_n$ be deterministic positive integers with $N_n\to\infty$ and $N_n=O(n)$. Under the Poisson corridor protocol,
\begin{equation}
\frac{J_{N_n,n}}{2bn}\longrightarrow j_\infty^\nu
=\frac{\mu^\nu(0,-1)}H\label{eq:bulk}
\end{equation}
almost surely and in $L^1$. In particular, whenever $L_n/W_n\to\gamma\in(0,\infty)$,
\begin{equation}
j_{\rm bulk}^\nu(\gamma)
=\tfrac{1}{H}\min_{|\xi|\le\gamma}\mu^\nu(\xi,-1)
=j_\infty^\nu.\label{eq:aspect}
\end{equation}
The repeated-path no-stopping policy attains the finite-slab currents on the left of~\eqref{eq:bulk}.
\end{theorem}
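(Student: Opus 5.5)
The plan is to sandwich $K_{N_n,n}/(2bn)$ between two bounds that both converge to $c_0=\mu^\nu(0,-1)$, equal to $Hj_\infty^\nu$ by Proposition~\ref{prop:identify}. Then \eqref{eq:exactcurrent} converts the statement about $K$ into one about $J$, and the deterministic bound $0\le J_{N,n}/(2bn)\le 1/(2bH)$ upgrades almost-sure convergence to $L^1$ convergence by bounded convergence. The no-stopping attainment claim is immediate from Theorem~\ref{thm:finite}.

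\textbf{Upper bound.} I use depth monotonicity, Proposition~\ref{prop:depth}, pathwise. Fix $\varepsilon>0$ and take the fixed depth $N_0$ from the proof of Proposition~\ref{prop:identify}, for which $Hj_{N_0}^\nu\le c_0+\varepsilon$. Since $N_n\to\infty$, we have $N_n\ge N_0$ eventually, and \eqref{eq:depthmono} iterated gives $K_{N_n,n}\le K_{N_0,n}$ on every realization. Theorem~\ref{thm:transverse} at the fixed depth $N_0$ gives $K_{N_0,n}/n\to\kappa_{N_0}^\nu=2bHj_{N_0}^\nu$ almost surely. Hence
\[
\limsup_n \frac{H K_{N_n,n}}{2bn}\cdot\frac1H\le j_{N_0}^\nu\le (c_0+\varepsilon)/H \quad\text{a.s.}
\]
Letting $\varepsilon\downarrow0$ along a countable sequence gives the upper half of \eqref{eq:bulk}. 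This half needs no growth restriction beyond $N_n\to\infty$.

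\textbf{Lower bound.} I use Lemma~\ref{lem:boundary} and the uniform shape theorem, and this is where $N_n=O(n)$ enters. Choose $\gamma'$ with $L_n=2N_nd\le\gamma' W_n$ for all $n$. All attachment vertices $p\in\mathcal A_{N_n,n}$ and $q\in\mathcal B_{N_n,n}$ then lie in $rC$ with $r=W_n$ and a fixed compact set $C=[-1,\gamma'+1]\times[-1,2]$. Restricted passage cost dominates unrestricted cost, so by \eqref{eq:boundary},
\[
K_{N_n,n}\ge \min_{p,q} T^*(p,q)\ge \min_{p,q}\mu^\nu(q-p)-o(W_n),
\]
uniformly on the common probability-one event of Theorem~\ref{thm:shape}. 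Write $q-p=(x,-W_n+O(1))$ with $|x|\le L_n+O(1)$. Positive homogeneity and the Lipschitz property give $\mu^\nu(q-p)=W_n\mu^\nu(x/W_n,-1)+O(1)$. By \eqref{eq:verticalmin} this is at least $W_n c_0+O(1)$. Dividing by $W_n=2bn$ yields $\liminf_n K_{N_n,n}/(2bn)\ge c_0$ almost surely.

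\textbf{Aspect-ratio identity \eqref{eq:aspect}.} When $L_n/W_n\to\gamma$, the normalized horizontal offsets $x/W_n$ range, up to $o(1)$, over the whole interval $[-\gamma,\gamma]$. So the lower-bound computation produces $\min_{|\xi|\le\gamma}\mu^\nu(\xi,-1)$, and Lemma~\ref{lem:symmetry} shows that this minimum is attained at $\xi=0$. Combined with the limit already proved, this gives equality with $j_\infty^\nu$.

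The main obstacle is the lower bound. It requires the shape theorem to apply uniformly over endpoints whose horizontal spread grows with $n$, and this is why the hypothesis $N_n=O(n)$ is needed to confine everything to a single scaled compact set. Uniformity over moving endpoints is exactly what \eqref{eq:shape} supplies. The remaining care is bookkeeping: the $O(1)$ vertical displacement corrections from the attachment collars must be absorbed into $o(W_n)$ via the Lipschitz bound.
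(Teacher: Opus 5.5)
Your proposal is correct and follows the paper's proof essentially step for step. For the upper limit you use depth monotonicity plus the fixed-depth transverse law; you fix the $N_0$ from Proposition~\ref{prop:identify}, whereas the paper lets a fixed $M\to\infty$ and uses $j_\infty^\nu=\inf_N j_N^\nu$, which is equivalent. For the lower limit you use the boundary lemma, the uniform shape theorem on a fixed compact set scaled by $W_n$, homogeneity and Lipschitz continuity, and \eqref{eq:verticalmin}; \eqref{eq:aspect} then follows immediately from \eqref{eq:verticalmin}.

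One cosmetic slip is in your upper-bound display. Its left side should be $K_{N_n,n}/(2bnH)=J_{N_n,n}/(2bn)$, not $\frac{HK_{N_n,n}}{2bn}\cdot\frac1H=K_{N_n,n}/(2bn)$, because $K_{N_0,n}/(2bn)\to Hj_{N_0}^\nu$.
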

\begin{IEEEproof}
For every fixed $M$, eventually $N_n\ge M$. Depth monotonicity gives $J_{N_n,n}\le J_{M,n}$. Divide by $W_n$, use Theorem~\ref{thm:transverse}, and then let $M\to\infty$ to obtain the upper limit $j_\infty^\nu$.

For the lower limit, choose minimizing attachments $p_n,q_n$ in~\eqref{eq:boundary}. Since $N_n=O(n)$, their locations divided by $W_n$ remain in a fixed compact set. The boundary lemma gives
\begin{equation}
\frac{q_n-p_n}{W_n}=(\xi_n,-1)+o(1),\quad \sup_n|\xi_n|<\infty.\label{eq:endpointscaled}
\end{equation}
Restricted passage cost is no smaller than unrestricted passage cost. Theorem~\ref{thm:shape}, positive homogeneity, and Lemma~\ref{lem:symmetry} therefore imply
\begin{equation}
\frac{K_{N_n,n}}{W_n}
\ge\mu^\nu(\xi_n,-1)-o(1)
\ge\mu^\nu(0,-1)-o(1).\label{eq:bulklower}
\end{equation}
Use Proposition~\ref{prop:identify} and $J_{N_n,n}=K_{N_n,n}/H$ to match the upper limit. Boundedness by $1/(2bH)$ gives $L^1$ convergence. Finally,~\eqref{eq:verticalmin} proves the minimization identity in~\eqref{eq:aspect}.
\end{IEEEproof}
The aspect-ratio independence uses the lateral symmetry of this corridor library. The theorem also covers sublinear diverging depths and bounded oscillating aspect ratios. All coordinates in the shape theorem are physical coordinates; the factor $1/H$ converts path-count capacity to birds per unit time.

Combining Theorem~\ref{thm:bulk} with Proposition~\ref{prop:survival} gives
\begin{equation}
0\le j_{\rm bulk}^\nu(\gamma)=j_\infty^\nu
\le\frac{p_\infty^\nu}{2bH}.\label{eq:comparison}
\end{equation}
In the straight-only regime,~\eqref{eq:straightonly} gives $j_\infty^\nu=0$, and Proposition~\ref{prop:survival} already gives zero current for arbitrary jointly growing dimensions.

The operational distinction between capacity and delay persists. If $\lambda<j_\infty^\nu$, the arrival rate $\lambda W_n$ is below $K_{N_n,n}/H$ for every sufficiently large slab in Theorem~\ref{thm:bulk}, almost surely. Under the arrival-moment and dispatch assumptions of Section~\ref{sec:queue}, its upstream admission queue is stable. Rates above $j_\infty^\nu$ overload a cut. Each statement takes long-run operation in a fixed finite slab before the spatial limit. No delay bound uniform in slab size is implied; the deterministic flight time $L_n/\nu$ itself grows with depth.

\subsection{Fixed Width and Reversed Limits}
The equality in~\eqref{eq:aspect} does not make the order of all spatial limits immaterial.
\begin{proposition}[eventual blocking at fixed width]\label{prop:fixedwidth}
Fix $\zeta>0$ and $n\ge1$. Then $K_{N,n}=0$ for all sufficiently large $N$, almost surely. More precisely, for any
\begin{equation}
0<r_0<\min\{R,b,d\},\quad
p_n=(1-e^{-\zeta\pi r_0^2})^n,\label{eq:blocklayerprob}
\end{equation}
one has
\begin{equation}
\Prob\{K_{N,n}>0\}\le(1-p_n)^N\le e^{-Np_n}.\label{eq:blocklayerbound}
\end{equation}
\end{proposition}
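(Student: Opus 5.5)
We construct, in each even longitudinal layer, a blocking event made of $n$ independent small void-complement events, one per waypoint of that layer. Fix $r_0$ as in~\eqref{eq:blocklayerprob}. For a layer index $h\in\{1,\dots,N\}$ and each of the $n$ waypoints $v=z_{2h,j}$ with $0\le j<2n$ and $j$ even in that layer, let $F_v$ be the event that some tree center lies in the open disk $B(v,r_0)$. This event has probability $1-e^{-\zeta\pi r_0^2}$. Since $r_0<R$, on $F_v$ the waypoint $v$ is within distance $R$ of a tree. Every arc incident to $v$ has $v$ in its closed centerline segment, so by~\eqref{eq:open} all such arcs are closed and $v$ is unusable.

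Every inlet-to-outlet path in $G_N[0,n]$ visits each even layer (the property recorded after~\eqref{eq:availability}, also used in Proposition~\ref{prop:depth}). Let $E_h=\bigcap_v F_v$ over the $n$ waypoints of layer $2h$. On $E_h$, all waypoints of layer $2h$ are blocked, so $K_{N,n}=0$. Hence $\{K_{N,n}>0\}\subseteq\bigcap_{h=1}^N E_h^c$.

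Next we check independence, which is the only point needing care. Distinct waypoints in one even layer are at vertical distance $2b$, and waypoints in different even layers are at horizontal distance at least $2d$. Since $r_0<\min\{b,d\}$, the disks $B(v,r_0)$ over all $v$ in layers $2,4,\dots,2N$ are pairwise disjoint. By Poisson independence on disjoint regions, the events $F_v$ are mutually independent. Therefore $\Prob(E_h)=(1-e^{-\zeta\pi r_0^2})^n=p_n$, and the $E_h$ are independent. It follows that $\Prob\{K_{N,n}>0\}\le(1-p_n)^N\le e^{-Np_n}$, using $1-x\le e^{-x}$. This gives~\eqref{eq:blocklayerbound}. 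Using layers $2,\dots,2N$ rather than including $0$ is harmless; including the inlet layer would also work.

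For the almost-sure statement, apply Proposition~\ref{prop:depth}: on a fixed realization $K_{N,n}$ is nonincreasing in $N$ and integer-valued. Hence the event that $K_{N,n}>0$ for all $N$ has probability at most $\inf_N e^{-Np_n}=0$, since $p_n>0$. So almost surely $K_{N,n}=0$ for some $N$, and then for all larger $N$. Alternatively, Borel--Cantelli applies, because $\sum_N e^{-Np_n}<\infty$.
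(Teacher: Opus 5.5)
Your proof is correct and follows the paper's argument essentially step for step. Both use disjoint radius-$r_0$ disks around the $n$ waypoints of each even layer $2,4,\dots,2N$, independent layer-blocking events of probability $p_n$, the fact that every crossing visits every intervening even layer, and depth monotonicity (Proposition~\ref{prop:depth}) for eventual blocking almost surely. You also spell out why $r_0<R$ closes every arc incident to a covered waypoint and why $r_0<\min\{b,d\}$ gives disjointness, which the paper states only briefly, and your Borel--Cantelli remark correctly gives the almost-sure conclusion without monotonicity.
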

\begin{IEEEproof}
At each even longitudinal layer $2h$, consider the $n$ radius-$r_0$ disks around its retained waypoints. These disks are disjoint within a layer and across distinct even layers. The event that every disk in one layer contains a tree center has probability $p_n$, independently across layers. On that event all waypoints in the layer are blocked. Since every crossing visits every intervening even layer, any one of layers $2,4,\ldots,2N$ with this property prevents a crossing. This proves~\eqref{eq:blocklayerbound}. An independent sequence of events with probability $p_n>0$ contains such a layer almost surely. Depth monotonicity gives eventual blocking.
\end{IEEEproof}
Thus the depth-first, width-second current is zero even when $j_\infty^\nu>0$. Rapid joint depth growth can also produce zero current. For example, put $\beta_0=-\log(1-e^{-\zeta\pi r_0^2})$. If $N_n\ge e^{(\beta_0+\varepsilon)n}$ eventually for some $\varepsilon>0$, then~\eqref{eq:blocklayerbound} is summable in $n$, and $K_{N_n,n}=0$ eventually almost surely. The growth restriction in Theorem~\ref{thm:bulk} therefore cannot simply be omitted.

\section{Obstacle Intensity and Positive Bulk Current}\label{sec:intensity}
We now hold the corridor geometry, speed, and headway fixed and vary the obstacle intensity. Write $\kappa_\infty^\nu(\zeta)$ when this dependence needs to be explicit. A positive single-path survival probability is necessary for positive current by~\eqref{eq:comparison}. To prove positive current, we control the density of disjoint crossings directly.

\subsection{An Explicit Extinction Bound}
\begin{proposition}[high-intensity extinction]\label{prop:extinction}
For every $N\ge1$,
\begin{equation}
\kappa_N^\nu\le e^{-\zeta\pi R^2}
\min\left\{1,\left(5e^{-4\zeta Rd}\right)^N\right\}.\label{eq:extinctionbound}
\end{equation}
If $4\zeta Rd>\log5$, then $p_\infty^\nu=0$ and $j_\infty^\nu=0$. In this regime the normalized current vanishes for arbitrary jointly diverging depths and widths.
\end{proposition}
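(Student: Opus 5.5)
We bound $\kappa_N^\nu$ by a first-moment count of open forward paths started at a single inlet waypoint. The proof of Proposition~\ref{prop:survival} gives the pathwise inlet count~\eqref{eq:survivalcount} at fixed depth $M=N$. After taking expectations, using stationarity, and applying~\eqref{eq:supmean}, this yields $\kappa_N^\nu\le\Prob(A_N(z_{0,0}))$. Here $A_N$ allows arbitrary transverse excursions in the full graph, so the bound is uniform in band width.

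Next we bound $\Prob(A_N(z_{0,0}))$ by $\E$ of the number of open paths from $z_{0,0}$ to layer $2N$. The factor $1$ in the minimum comes from the inlet-waypoint bound already proved in Proposition~\ref{prop:finitepositive}, namely $e^{-\zeta\pi R^2}$. For the other factor, we decompose a forward path into $N$ double-steps, from even layer $2h$ to $2h+2$. Each double-step is either one straight arc or one of four diagonal pairs (up--up, up--down, down--up, down--down), so there are at most $5$ choices per double-step and at most $5^N$ paths in total.

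For a fixed path, openness of all its arcs means that no tree center lies in the union $U$ of the radius-$R$ neighborhoods of its centerline segments. The void probability is therefore $e^{-\zeta|U|}$, and the key step is the area bound
\begin{equation*}
|U|\ge\pi R^2+4NRd.
\end{equation*}
To prove it, we use the fact that the path is a graph over $x_1\in[0,2Nd]$. For each $x_1$, the vertical section of $U$ through the path point contains a segment of length at least $2R$, since the vertical chord through a point of the tube has length $2R/\cos\phi\ge 2R$. Integrating over $x_1\in[0,2Nd]$ gives $4NRd$. The half-disk caps at the two ends lie outside the slab $0\le x_1\le 2Nd$ and together contribute $\pi R^2$.

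This sweep argument is the main technical point. The vertical chord through the curve at abscissa $x_1$ lies within distance $R$ of the curve, which is enough for the slab part. The caps need separate care: the starting cap is a half-disk on the left of $x_1=0$ centered at the inlet waypoint, and similarly the ending cap lies to the right of $x_1=2Nd$. These two pieces are disjoint from the slab, so we obtain $|U|\ge\pi R^2+4NRd$. A union bound then gives
\begin{equation*}
\kappa_N^\nu\le 5^N e^{-\zeta(\pi R^2+4NRd)},
\end{equation*}
and combining this with the inlet bound yields~\eqref{eq:extinctionbound}.

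Finally, if $4\zeta Rd>\log 5$, the bound on $\Prob(A_N)$ tends to $0$. Since the events $A_N$ decrease to the infinite-path event, $p_\infty^\nu=0$. Proposition~\ref{prop:survival} then gives $\kappa_\infty^\nu\le p_\infty^\nu=0$, and hence $j_\infty^\nu=0$. The same proposition gives vanishing normalized current for arbitrary jointly diverging $N_k,n_k$.
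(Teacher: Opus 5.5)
Your proposal is correct and follows the paper's proof essentially step for step: the inlet count from Proposition~\ref{prop:survival} reduces $\kappa_N^\nu$ to $\Prob(A_N(z_{0,0}))$, and a union bound over at most $5^N$ forward paths is combined with the area bound $2RL+\pi R^2$, obtained from vertical sections over $[0,L]$ plus the two end half-disks, after which Proposition~\ref{prop:survival} delivers the limiting conclusions. The only difference is cosmetic: you take the factor $e^{-\zeta\pi R^2}$ from Proposition~\ref{prop:finitepositive}, whereas the paper reads it off directly from the clear-inlet requirement contained in $A_N$.
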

\begin{IEEEproof}
Between consecutive even layers, a route uses either one straight arc or two diagonal arcs. There are at most five choices: one straight choice and four ordered pairs of diagonal choices. Thus at most $5^N$ candidate paths connect a specified inlet to layer $2N$ in the full graph.

Every such path is a longitudinally monotone graph over $[0,L]$. At each longitudinal coordinate in this interval, its radius-$R$ neighborhood contains a vertical interval of length $2R$. The two endpoint disks contribute another two semicircles outside $[0,L]$. Hence the area of its forbidden region for tree centers is at least $2RL+\pi R^2$. The Poisson void probability and a union bound give
\begin{align}
\Prob(A_N(z_{0,0}))
&\le5^N e^{-\zeta(2RL+\pi R^2)}\notag\\
&=e^{-\zeta\pi R^2}\left(5e^{-4\zeta Rd}\right)^N.\label{eq:pathunion}
\end{align}
The event also requires a clear inlet, so its probability is at most $e^{-\zeta\pi R^2}$. The inlet-count argument of Proposition~\ref{prop:survival} gives $\kappa_N^\nu\le\Prob(A_N(z_{0,0}))$, proving~\eqref{eq:extinctionbound}. If $5e^{-4\zeta Rd}<1$, these probabilities tend to zero, and Proposition~\ref{prop:survival} proves the remaining assertions.
\end{IEEEproof}

\subsection{A Sparse-Forest Cut Bound}
Assume for the remainder of this subsection that diagonal corridors are available, so~\eqref{eq:availability} holds. Let
\begin{equation}
\ell_d=\sqrt{d^2+b^2},\quad A_d=2R\ell_d+\pi R^2,\quad
\epsilon_\zeta=1-e^{-\zeta A_d}.\label{eq:diagclosed}
\end{equation}
Thus $\epsilon_\zeta$ is the probability that a specified diagonal corridor is closed. Define
\begin{equation}
M=\left\lfloor\frac{2R+\ell_d}{\min\{d,b\}}\right\rfloor+1,
\quad \chi=2M^2.\label{eq:coloring}
\end{equation}
The constants are explicit but need not be small.

\begin{theorem}[sparse-forest current]\label{thm:sparse}
Fix $0<\delta\le1/2$ and put
\begin{equation}
\rho=6\epsilon_\zeta^{\delta/\chi}.\label{eq:rho}
\end{equation}
If $\rho<1$, then for all $N\ge1$ and $n\ge2$,
\begin{equation}
\Prob\left\{K_{N,n}<(1-\delta)\left(n-\tfrac32\right)\right\}
\le\frac{4N}{3(1-\rho)}\rho^{2n-3}.\label{eq:sparsebound}
\end{equation}
In particular,
\begin{equation}
\kappa_\infty^\nu(\zeta)\ge1-\delta.\label{eq:sparsecurrent}
\end{equation}
Moreover,
\begin{equation}
\lim_{\zeta\downarrow0}\kappa_\infty^\nu(\zeta)=1,
\quad
\lim_{\zeta\downarrow0}j_\infty^\nu(\zeta)=\tfrac{1}{2bH}.\label{eq:sparselimit}
\end{equation}
\end{theorem}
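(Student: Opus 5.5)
The plan is to pass to a diagonal subnetwork, use the cut characterization \eqref{eq:cut}, and run a Peierls-type count of small cuts, with a coloring to handle dependence between corridors.

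\emph{Diagonal subnetwork and the cut condition.} Discard the straight arcs and keep only the diagonal arcs $(i,j)\to(i+1,j\pm1)$. This can only lower $K_{N,n}$, so a lower bound for the subnetwork suffices. By \eqref{eq:cut}, the event $K_{N,n}<(1-\delta)(n-\tfrac32)$ means there is a waypoint set $C$ of size below $(1-\delta)(n-\tfrac32)$ that meets every inlet-to-outlet path in the diagonal subnetwork. I would then convert $C$ into a blocking object in a dual lattice. In the diagonal network, which is a rotated square lattice, a minimal blocking set of waypoints together with closed diagonal arcs forms a dual chain running from the lower boundary to the upper boundary. A top-to-bottom dual path must cross every one of the roughly $2n-3$ diagonal layers between the band edges. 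This is the source of the exponent $2n-3$ and of the $n-\tfrac32$ normalization. Each crossing step is paid for either by a cut waypoint or by a closed diagonal corridor. So if $|C|<(1-\delta)(n-\tfrac32)$, the chain has length $\ell\ge 2n-3$ and contains at least $\delta\ell$ closed corridors. I would phrase this as a deterministic combinatorial lemma proved with the planar picture of Lemma~\ref{lem:planar}/Lemma~\ref{lem:boundary}, but restricted to the diagonal arcs.

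\emph{Counting chains and the coloring.} Dual chains of length $\ell$ that start at one of at most $\sim N$ lower attachment positions number at most $\mathrm{const}\cdot N\cdot 3^\ell$. Choosing which of their steps are closed corridors adds at most $2^\ell$ more. Together these give the factor $6^\ell$ in $\rho$.

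The closedness events are dependent, but a corridor's status depends only on Poisson points within distance $R$ of its segment. Two diagonal segments are independent once their $R$-neighborhoods are disjoint. Partition the diagonal arcs into $\chi=2M^2$ classes: two orientations times an $M\times M$ sublattice pattern, with $M$ as in \eqref{eq:coloring}, so that distinct arcs in one class are farther than $2R+\ell_d$ apart. Within a class, closedness events are then independent, each with probability $\epsilon_\zeta$.

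Among $\delta\ell$ specified closed arcs, some class contains at least $\delta\ell/\chi$ of them. Since the event ``all these arcs are closed'' is contained in the event that this class subfamily is closed, its probability is at most $\epsilon_\zeta^{\delta\ell/\chi}$. Summing over chains gives a bound of the form $\mathrm{const}\cdot N\sum_{\ell\ge 2n-3}\rho^\ell$, which equals $\frac{\mathrm{const}\cdot N}{1-\rho}\rho^{2n-3}$. I would then match the constant $4/3$ from the precise starting-position and first-step count.

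\emph{Consequences.} For \eqref{eq:sparsecurrent}, I would use $\kappa_N^\nu=\lim_n\E K_{N,n}/n$ (Theorem~\ref{thm:transverse}) together with the bound $\E K_{N,n}\ge(1-\delta)(n-\tfrac32)(1-\Pr\{\cdot\})$. For fixed $N$, the failure probability tends to $0$ as $n\to\infty$, so $\kappa_N^\nu\ge1-\delta$ for every $N$, and hence also for the infimum $\kappa_\infty^\nu$. For \eqref{eq:sparselimit}, note that $\epsilon_\zeta\to0$ as $\zeta\downarrow0$. So for any $\delta$ we eventually have $\rho<1$, which gives $\liminf\kappa_\infty^\nu\ge1-\delta$. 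Combined with $\kappa\le1$, this yields $\kappa_\infty^\nu\to1$, and then $j_\infty^\nu=\kappa_\infty^\nu/(2bH)\to1/(2bH)$.

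\emph{Main obstacle.} The hardest step is the deterministic cut-to-dual-chain lemma. It must show that a small waypoint cut forces a long dual chain in which a fraction $\delta$ of the steps are \emph{closed corridors}, not cut waypoints, and it must bound the number of such chains by $3^\ell$ per start. I would first try the standard rotated-lattice duality: each waypoint blocks a unit step of the dual, and each closed arc blocks one as well. I would also check carefully that boundary effects cost only the $\tfrac32$ shift.
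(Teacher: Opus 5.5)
\textbf{There is a genuine gap: the deterministic cut-to-dual lemma you defer is where the proof lives, and both parts of your sketch for it are wrong.}

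Much of your proposal is sound. The outer architecture is the paper's: the diagonal subnetwork, top-to-bottom dual paths, a $4N3^{\ell-1}$ path count, $2^\ell$ subset choices, the $\chi$-coloring, and a geometric sum. Deriving \eqref{eq:sparsecurrent} from $\E K_{N,n}$ at fixed $N$ is also valid; it is a legitimate alternative to the paper's Borel--Cantelli argument with $N=n$ plus depth monotonicity.

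\textbf{First problem: directed cuts have free steps.} The network is directed, so cut waypoints and closed diagonals that block every \emph{forward} path need not form a dual chain. Under the asymmetric duality \eqref{eq:duallengths}, the separating dual path generally also crosses open arcs that point from the sink side back into the source side. Those crossings cost zero, and they are paid for by neither a cut waypoint nor a closed corridor. A symptom is that your accounting never uses the hypothesis $\delta\le1/2$. The correct count is as follows. Let $D$ and $U$ be the numbers of downward and upward steps, so $D-U=h=2n-3$ and $D+U=\ell$. Only downward steps are charged, so a cost below $(1-\delta)h$ forces
$Z>\tfrac{\ell+h}{2}-(1-\delta)h=\tfrac{\ell-h}{2}+\delta h\ge\delta\ell$.
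The last inequality is exactly where $\delta\le1/2$ is needed. Your conclusion ``at least $\delta\ell$ closed corridors'' is true, but not for the reason you give.

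\textbf{Second problem: removing the waypoint capacities.} ``Each waypoint blocks a unit step'' does not work in the split-vertex picture of Lemma~\ref{lem:planar}. There a gate crossing drops height $2b$ at unit cost, and faces have more than four dual neighbours. A cheap path can then have length near $(1+\delta)h/2$, so neither $\rho=6\epsilon_\zeta^{\delta/\chi}$ nor the exponent $2n-3$ would come out.

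The paper avoids mixed vertex/edge cuts entirely:
\begin{itemize}
\item It drops the waypoint capacities and gives each open diagonal unit edge capacity.
\item Each waypoint has at most two incoming and two outgoing diagonals, so half of any edge flow is a feasible waypoint flow. Hence $K_{N,n}\ge\tfrac12F_{N,n}$. Dually, this amounts to replacing each cut waypoint by its at most two outgoing arcs (incoming arcs at the outlet).
\item Then $K_{N,n}<(1-\delta)(n-\tfrac32)$ gives $F_{N,n}<(1-\delta)(2n-3)$.
\item The edge dual is the degree-four diamond lattice, in which every downward step costs the openness of a single diagonal.
\end{itemize}
This relaxation is the real source of the factor two between $n-\tfrac32$ and $2n-3$. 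It is also what makes your $3^{\ell}$ count and the bound $\ell\ge2n-3$ valid.
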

\begin{IEEEproof}
First discard all straight arcs. Give each open diagonal arc unit edge capacity, without imposing waypoint capacities, and let $F_{N,n}$ be the resulting maximum edge flow. An interior waypoint has at most two incoming and two outgoing diagonal arcs. Each inlet and outlet also has at most two incident diagonal arcs in the relevant direction. Dividing any feasible edge flow by two therefore respects unit waypoint capacities, including at the terminals. Max-flow integrality for the waypoint network gives
\begin{equation}
K_{N,n}\ge\tfrac12 F_{N,n}.\label{eq:edgecomparison}
\end{equation}

For the diagonal-only graph, join consecutive inlet vertices and consecutive outlet vertices by zero-capacity vertical chains. Use the dual crossing lengths in~\eqref{eq:duallengths}, with no exits through these terminal chains. The bounded faces are diamonds and terminal triangles. The diamond with vertices $z_{i-1,j},z_{i,j-1},z_{i+1,j},z_{i,j+1}$ has center
\begin{equation*}
(id,b(j+\tfrac12)),\quad i+j\text{ odd},
\end{equation*}
where $1\le i\le2N-1$ and $1\le j\le2n-2$. For $0\le k\le n-2$, the left terminal triangle has vertices $z_{0,2k},z_{1,2k+1},z_{0,2k+2}$; the right terminal triangle has vertices $z_{2N,2k},z_{2N-1,2k+1},z_{2N,2k+2}$. Choose their dual representatives at
\begin{equation*}
(\tau_0,b(2k+\tfrac32)),\quad
(L-\tau_0,b(2k+\tfrac32)),
\end{equation*}
respectively, for a fixed $0<\tau_0<d$. Use the diamond centers for all other representatives. Adjacent internal faces then have representative heights differing by exactly $b$. A downward crossing has the capacity of the corresponding diagonal as its cost, and an upward crossing has cost zero.

The $N$ upper attachment faces are the diamonds with centers $((2h+1)d,b(2n-\tfrac32))$, $0\le h<N$. The lower attachments are the diamonds with centers $(2hd,3b/2)$, $1\le h<N$, together with the two terminal triangles at $k=0$. Thus all lower attachment representatives have height $3b/2$, and an internal upper-to-lower path has net downward displacement
\begin{equation}
hb,\quad h=2n-3.\label{eq:dualheight}
\end{equation}
The port-to-port minimum passage cost equals $F_{N,n}$. Removing its first and last crossings leaves an internal path of no greater cost. Nonnegative lengths permit loops to be erased.

Consider a simple internal path of $\ell$ steps, with $D$ downward steps and $U$ upward steps. Then $D-U=h$ and $D+U=\ell$, so $D=(\ell+h)/2$ and $\ell\ge h$. If its cost is less than $(1-\delta)h$, the number $Z$ of closed downward crossings satisfies
\begin{align}
Z&>\tfrac{\ell+h}{2}-(1-\delta)h\notag\\
 &=\tfrac{\ell-h}{2}+\delta h\ge\delta\ell.\label{eq:closedfraction}
\end{align}
The last inequality uses $\delta\le1/2$. Thus a cheap cut must encounter a substantial fraction of closed diagonals.

We next account for dependence. Color a diagonal by its orientation and by the two coordinates of its tail modulo $M$. This gives at most $\chi$ colors. Distinct diagonals of one color are parallel translates by a vector of length at least $M\min\{d,b\}>2R+\ell_d$. Their centerline segments are therefore more than $2R$ apart, so their clearance capsules are disjoint. Within each color, corridor states are independent. Every collection of $k$ distinct diagonals contains at least $k/\chi$ diagonals of one color, and consequently
\begin{equation}
\Prob\{\text{all diagonals in the collection are closed}\}
\le\epsilon_\zeta^{k/\chi}.\label{eq:dependentbound}
\end{equation}
For a prescribed simple dual path, the crossed primal edges are distinct. A union bound over subsets of its downward crossings, using~\eqref{eq:dependentbound}, gives
\begin{equation}
\Prob\{Z\ge\delta\ell\}\le2^\ell
\epsilon_\zeta^{\delta\ell/\chi}.\label{eq:pathclosed}
\end{equation}

The internal dual has degree at most four. The number of simple $\ell$-step paths starting at one of the $N$ upper attachments is at most $4N3^{\ell-1}$ for $\ell\ge1$. On the event in~\eqref{eq:sparsebound},~\eqref{eq:edgecomparison} implies $F_{N,n}<(1-\delta)h$, so there exists a path satisfying~\eqref{eq:closedfraction}. Summing~\eqref{eq:pathclosed} over all possible lengths therefore yields
\begin{equation*}
\Prob\{F_{N,n}<(1-\delta)h\}
\le\tfrac{4N}{3}\sum_{\ell\ge h}\left(6\epsilon_\zeta^{\delta/\chi}\right)^\ell,
\end{equation*}
which proves~\eqref{eq:sparsebound}.

Take $N=n$. The right side of~\eqref{eq:sparsebound} is summable, so Borel--Cantelli gives
\begin{equation}
\liminf_{n\to\infty}\frac{K_{n,n}}n\ge1-\delta
\quad\text{almost surely}.\label{eq:sparsediagonal}
\end{equation}
For every fixed depth $M_0$, depth monotonicity gives $K_{M_0,n}\ge K_{n,n}$ once $n\ge M_0$. Theorem~\ref{thm:transverse} implies $\kappa_{M_0}^\nu\ge1-\delta$. Let $M_0\to\infty$ to obtain~\eqref{eq:sparsecurrent}. Finally, $\epsilon_\zeta\to0$ as $\zeta\downarrow0$. For each fixed $\delta>0$, condition~\eqref{eq:rho} therefore holds at all sufficiently small intensities. Since $\kappa_\infty^\nu\le1$, letting $\delta\downarrow0$ proves~\eqref{eq:sparselimit}.
\end{IEEEproof}
The proof establishes positive current without invoking the shape theorem. The latter identifies the same current with the vertical time constant and with the joint-growth limits. Dependence affects the coloring exponent $\chi$; it is not replaced by an independent-edge approximation.

\subsection{A Nontrivial Critical Intensity}
\begin{corollary}[intensity threshold for extensive current]\label{cor:critical}
When diagonal corridors are available, define
\begin{equation}
\zeta_c=\sup\{\zeta>0:\kappa_\infty^\nu(\zeta)>0\}.\label{eq:criticaldef}
\end{equation}
Then, with the constants in~\eqref{eq:diagclosed}--\eqref{eq:coloring},
\begin{equation}
0<\zeta_0:=-\frac{\log(1-6^{-2\chi})}{A_d}
\le\zeta_c\le\frac{\log5}{4Rd}.\label{eq:criticalbounds}
\end{equation}
The bulk current is positive for $0<\zeta<\zeta_c$ and zero for $\zeta>\zeta_c$. No assertion at $\zeta=\zeta_c$ is needed for these conclusions.
\end{corollary}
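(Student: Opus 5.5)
The plan is to read the two bounds in \eqref{eq:criticalbounds} off Theorem~\ref{thm:sparse} and Proposition~\ref{prop:extinction}, respectively. Monotonicity of $\kappa_\infty^\nu$ in $\zeta$ then gives the dichotomy around $\zeta_c$.

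\emph{Lower bound.} I would pick $\delta=1/2$ in Theorem~\ref{thm:sparse}. The condition $\rho<1$ reads $6\epsilon_\zeta^{1/(2\chi)}<1$, that is,
\[
\epsilon_\zeta<6^{-2\chi},
\]
and by \eqref{eq:diagclosed} this is $1-e^{-\zeta A_d}<6^{-2\chi}$, equivalently $\zeta<\zeta_0$. For every $\zeta\in(0,\zeta_0)$, \eqref{eq:sparsecurrent} gives $\kappa_\infty^\nu(\zeta)\ge1/2>0$, so $\zeta_c\ge\zeta_0$. Positivity of $\zeta_0$ is immediate, since $0<1-6^{-2\chi}<1$ and $A_d>0$.

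\emph{Upper bound.} For $\zeta>\log5/(4Rd)$, Proposition~\ref{prop:extinction} gives $j_\infty^\nu=0$, hence $\kappa_\infty^\nu(\zeta)=0$. Such $\zeta$ are therefore excluded from the set in \eqref{eq:criticaldef}, and $\zeta_c\le\log5/(4Rd)$.

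\emph{Dichotomy.} The one step needing an argument beyond citation is that $\zeta\mapsto\kappa_\infty^\nu(\zeta)$ is nonincreasing. I would prove this by Poisson thinning. For $\zeta_1<\zeta_2$, realize the intensity-$\zeta_1$ forest as an independent thinning of the intensity-$\zeta_2$ forest. Every arc open under the denser forest stays open under the thinner one, so $K_{N,n}$ is pathwise larger for $\zeta_1$. Taking expectations and applying \eqref{eq:supmean} gives $\kappa_N^\nu(\zeta_1)\ge\kappa_N^\nu(\zeta_2)$ for every $N$, and the infimum over $N$ from Proposition~\ref{prop:depth} preserves this.

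With monotonicity in hand, the two cases follow. If $\zeta<\zeta_c$, the definition of the supremum yields some $\zeta'>\zeta$ with $\kappa_\infty^\nu(\zeta')>0$, so $\kappa_\infty^\nu(\zeta)\ge\kappa_\infty^\nu(\zeta')>0$. If $\zeta>\zeta_c$, then $\zeta$ lies outside the set in \eqref{eq:criticaldef}, so $\kappa_\infty^\nu(\zeta)=0$.

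Finally, \eqref{eq:comparison} and Theorem~\ref{thm:bulk} identify $\kappa_\infty^\nu$ with the bulk current up to the factor $2bH$, so the conclusions transfer to $j_{\rm bulk}^\nu$. I expect only the coupling step to need care, and even it is routine.
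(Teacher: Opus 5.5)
Your proposal is correct and follows the paper's proof. It uses Theorem~\ref{thm:sparse} with $\delta=1/2$ for the lower bound, where your check that $\rho<1$ is equivalent to $\zeta<\zeta_0$ is right, Proposition~\ref{prop:extinction} for the upper bound, and a monotone Poisson coupling for the dichotomy. Your independent thinning of the denser forest is the paper's superposition coupling viewed from the other side, and you usefully spell out how the pathwise inequality for $K_{N,n}$ passes through \eqref{eq:supmean} and the infimum over $N$.
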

\begin{IEEEproof}
Poisson superposition couples two intensities by adding tree centers. Adding obstacles cannot open a corridor or increase $K_{N,n}$, so $\kappa_\infty^\nu(\zeta)$ is nonincreasing. Theorem~\ref{thm:sparse} with $\delta=1/2$ applies whenever $\zeta<\zeta_0$, so the set in~\eqref{eq:criticaldef} contains $(0,\zeta_0)$. Proposition~\ref{prop:extinction} makes it bounded above by $\log5/(4Rd)$. Monotonicity then gives the two regimes.
\end{IEEEproof}
The estimates give sufficient conditions, not an exact value of the threshold. In the straight-only regime, $j_\infty^\nu(\zeta)=0$ for every positive intensity by~\eqref{eq:straightonly}.

\section{Conclusion}\label{sec:conclusion}
Karaman and Frazzoli's stochastic-forest theorem gives a fundamental speed limit for indefinite flight by one bird. For many birds crossing a forest belt, the corresponding engineering question is a current limit, which we establish in several regimes. With a prescribed forward corridor library and common waypoint headway, that current equals a vertex-cut capacity. Repeated launches along an integral maximum flow's disjoint paths achieve the cut bound while preserving constant forward speed. Because all routes have the same transit time, the same schedule maximizes deliveries at every finite horizon. The cut converse and trajectory schedule therefore concern the same model.

The analogy we make with porous media concerns geometry-limited transport. Unlike Darcy flow~\cite{bear1972,darcy1856}, the present current is not related to a pressure gradient through an effective permeability. Its principal object is a capacity-limited throughput, governed by a dynamically feasible cut rather than viscous dissipation. The graph exactly models the specified protocol, not all continuous multi-agent trajectories.

The Poisson forest admits a deterministic transverse current-density law, a finite-width mean-error bound, and concentration about the mean. Current is strictly positive at every finite depth. A uniform asymmetric dual shape theorem and a confined vertical-block construction identify the joint-growth current with the width-first, depth-second limit whenever the diverging depth is of order at most the width. Lateral symmetry removes dependence on the positive aspect ratio. At fixed finite width, however, a sufficiently deep forest is eventually blocked almost surely.

When diagonal corridors are available, sufficiently sparse forests support positive extensive current. As obstacle intensity vanishes, this current approaches the inlet-capacity limit. A forward-path count gives extinction at sufficiently large intensity, establishing a nontrivial critical intensity for the protocol. Determining the critical value, the behavior at criticality, and the relationship between positive survival and positive current are further probabilistic questions.

Congestion is governed not only by average tree density but by bottlenecks in the feasible-corridor network. Below the cut limit, repeated-path flight and admission sustain traffic; above it, no policy respecting the protocol can avoid overload. Quantifying information, tracking, and coordination costs would determine how closely less structured controllers approach this benchmark.

Alternative path objectives become relevant when flight time, rather than only throughput, varies across routes. The pipelined transport model of Marla et al.~\cite{marla2022} requires commodities to traverse single paths and identifies a short-and-wide path criterion. This provides a related perspective on physical transport without splitting individual agents. In the present model every crossing takes exactly $L/\nu$, so there is no route-dependent transit-time tradeoff. Such a tradeoff would arise with nonuniform forward speeds, route-dependent dwell times, or other path costs.

Allowing safe storage inside the forest gives a different operational extension. With interacting corridor services, the relevant capacity region is determined by jointly feasible service vectors. Backpressure can then provide a throughput-oriented policy under its queueing and service assumptions~\cite{neely2010,tassiulas1992,srikant2014}. Such a model permits intermediate waiting and must account for actual transfers, storage resources, and merge conflicts. Density-dependent maneuverability or service rates would add a further nonlinear constraint to this network control problem.

\section*{AI Use}
ChatGPT 5.6-Sol and then ChatGPT 6 were used to support mathematical development, writing, and figure drawing.

\section*{Acknowledgment}
Initial discussions with Sertac Karaman are appreciated.

\bibliographystyle{IEEEtran}
\bibliography{flock_flight}

% Generated by IEEEtran.bst, version: 1.14 (2015/08/26)
\begin{thebibliography}{10}
\providecommand{\url}[1]{#1}
\csname url@samestyle\endcsname
\providecommand{\newblock}{\relax}
\providecommand{\bibinfo}[2]{#2}
\providecommand{\BIBentrySTDinterwordspacing}{\spaceskip=0pt\relax}
\providecommand{\BIBentryALTinterwordstretchfactor}{4}
\providecommand{\BIBentryALTinterwordspacing}{\spaceskip=\fontdimen2\font plus
\BIBentryALTinterwordstretchfactor\fontdimen3\font minus
  \fontdimen4\font\relax}
\providecommand{\BIBforeignlanguage}[2]{{%
\expandafter\ifx\csname l@#1\endcsname\relax
\typeout{** WARNING: IEEEtran.bst: No hyphenation pattern has been}%
\typeout{** loaded for the language `#1'. Using the pattern for}%
\typeout{** the default language instead.}%
\else
\language=\csname l@#1\endcsname
\fi
#2}}
\providecommand{\BIBdecl}{\relax}
\BIBdecl

\bibitem{karaman2012}
S.~Karaman and E.~Frazzoli, ``High-speed flight in an ergodic forest,'' in
  \emph{Proc. IEEE Int. Conf. Robotics and Automation}, May 2012, pp.
  2899--2906.

\bibitem{bear1972}
J.~Bear, \emph{Dynamics of Fluids in Porous Media}.\hskip 1em plus 0.5em minus
  0.4em\relax New York, NY, USA: American Elsevier, 1972.

\bibitem{chayes1986}
J.~T. Chayes and L.~Chayes, ``Bulk transport properties and exponent
  inequalities for random resistor and flow networks,'' \emph{Communications in
  Mathematical Physics}, vol. 105, pp. 133--152, 1986.

\bibitem{ford1956}
L.~R. Ford, Jr. and D.~R. Fulkerson, ``Maximal flow through a network,''
  \emph{Canadian Journal of Mathematics}, vol.~8, pp. 399--404, 1956.

\bibitem{elias1956}
P.~Elias, A.~Feinstein, and C.~E. Shannon, ``A note on the maximum flow through
  a network,'' \emph{IRE Transactions on Information Theory}, vol.~2, no.~4,
  pp. 117--119, Dec. 1956.

\bibitem{kesten1987}
H.~Kesten, ``Surfaces with minimal random weights and maximal flows: A higher
  dimensional version of first-passage percolation,'' \emph{Illinois Journal of
  Mathematics}, vol.~31, no.~1, pp. 99--166, 1987.

\bibitem{rossignol2010tilted}
R.~Rossignol and M.~Th{\'e}ret, ``Law of large numbers for the maximal flow
  through tilted cylinders in two-dimensional first passage percolation,''
  \emph{Stochastic Processes and their Applications}, vol. 120, no.~6, pp.
  873--900, 2010.

\bibitem{rossignol2010box}
------, ``Lower large deviations and laws of large numbers for maximal flows
  through a box in first passage percolation,'' \emph{Annales de l'Institut
  Henri Poincar{\'e}, Probabilit{\'e}s et Statistiques}, vol.~46, no.~4, pp.
  1093--1131, 2010.

\bibitem{farias2020}
R.~Farias and M.~Kallmann, ``Planar max flow maps and determination of lanes
  with clearance,'' \emph{Autonomous Robots}, vol.~44, pp. 1323--1339, 2020.

\bibitem{ford1958}
L.~R. Ford, Jr. and D.~R. Fulkerson, ``Constructing maximal dynamic flows from
  static flows,'' \emph{Operations Research}, vol.~6, no.~3, pp. 419--433,
  1958.

\bibitem{kingman1973}
J.~F.~C. Kingman, ``Subadditive ergodic theory,'' \emph{The Annals of
  Probability}, vol.~1, no.~6, pp. 883--909, 1973.

\bibitem{steele1989}
J.~M. Steele, ``{Kingman's} subadditive ergodic theorem,'' \emph{Annales de
  l'Institut Henri Poincar{\'e}, Probabilit{\'e}s et Statistiques}, vol.~25,
  no.~1, pp. 93--98, 1989.

\bibitem{neely2010}
M.~J. Neely, \emph{Stochastic Network Optimization with Application to
  Communication and Queueing Systems}.\hskip 1em plus 0.5em minus 0.4em\relax
  San Rafael, CA, USA: Morgan \& Claypool, 2010.

\bibitem{last2017}
G.~Last and M.~Penrose, \emph{Lectures on the Poisson Process}.\hskip 1em plus
  0.5em minus 0.4em\relax Cambridge, UK: Cambridge University Press, 2017.

\bibitem{ahuja1993}
R.~K. Ahuja, T.~L. Magnanti, and J.~B. Orlin, \emph{Network Flows: Theory,
  Algorithms, and Applications}.\hskip 1em plus 0.5em minus 0.4em\relax
  Englewood Cliffs, NJ, USA: Prentice Hall, 1993.

\bibitem{mcdiarmid1989}
C.~McDiarmid, ``On the method of bounded differences,'' in \emph{Surveys in
  Combinatorics, 1989}, ser. London Mathematical Society Lecture Note Series,
  J.~Siemons, Ed.\hskip 1em plus 0.5em minus 0.4em\relax Cambridge, UK:
  Cambridge University Press, 1989, vol. 141, pp. 148--188.

\bibitem{hoeffding1963}
W.~Hoeffding, ``Probability inequalities for sums of bounded random
  variables,'' \emph{Journal of the American Statistical Association}, vol.~58,
  no. 301, pp. 13--30, 1963.

\bibitem{lighthill1955}
M.~J. Lighthill and G.~B. Whitham, ``On kinematic waves. {II}. a theory of
  traffic flow on long crowded roads,'' \emph{Proceedings of the Royal Society
  of London. Series A}, vol. 229, no. 1178, pp. 317--345, 1955.

\bibitem{richards1956}
P.~I. Richards, ``Shock waves on the highway,'' \emph{Operations Research},
  vol.~4, no.~1, pp. 42--51, 1956.

\bibitem{kaplan2011}
H.~Kaplan and Y.~Nussbaum, ``Maximum flow in directed planar graphs with vertex
  capacities,'' \emph{Algorithmica}, vol.~61, no.~1, pp. 174--189, 2011.

\bibitem{hassin1981}
R.~Hassin, ``Maximum flow in $(s,t)$ planar networks,'' \emph{Information
  Processing Letters}, vol.~13, no.~3, p. 107, 1981.

\bibitem{erickson2010}
J.~Erickson, ``Maximum flows and parametric shortest paths in planar graphs,''
  in \emph{Proc. 21st Annual ACM--SIAM Symp. Discrete Algorithms}, 2010, pp.
  794--804.

\bibitem{zenklusen2010}
R.~Zenklusen, ``Network flow interdiction on planar graphs,'' \emph{Discrete
  Applied Mathematics}, vol. 158, no.~13, pp. 1441--1455, 2010.

\bibitem{boivin1990}
D.~Boivin, ``First passage percolation: The stationary case,''
  \emph{Probability Theory and Related Fields}, vol.~86, no.~4, pp. 491--499,
  1990.

\bibitem{haggstrom1995}
O.~H{\"a}ggstr{\"o}m and R.~Meester, ``Asymptotic shapes for stationary first
  passage percolation,'' \emph{The Annals of Probability}, vol.~23, no.~4, pp.
  1511--1522, 1995.

\bibitem{bjorklund2010}
M.~Bj{\"o}rklund, ``The asymptotic shape theorem for generalized first passage
  percolation,'' \emph{The Annals of Probability}, vol.~38, no.~2, pp.
  632--660, 2010.

\bibitem{krengel1985}
U.~Krengel, \emph{Ergodic Theorems}, ser. De Gruyter Studies in
  Mathematics.\hskip 1em plus 0.5em minus 0.4em\relax Berlin, Germany: Walter
  de Gruyter, 1985, vol.~6.

\bibitem{darcy1856}
H.~Darcy, \emph{Les fontaines publiques de la ville de Dijon}.\hskip 1em plus
  0.5em minus 0.4em\relax Paris, France: Dalmont, 1856.

\bibitem{marla2022}
L.~Marla, L.~R. Varshney, D.~Shah, N.~A. Prakash, and M.~E. Gale, ``Short and
  wide network paths,'' \emph{IEEE Transactions on Network Science and
  Engineering}, vol.~9, no.~2, pp. 524--537, Mar. 2022.

\bibitem{tassiulas1992}
L.~Tassiulas and A.~Ephremides, ``Stability properties of constrained queueing
  systems and scheduling policies for maximum throughput in multihop radio
  networks,'' \emph{IEEE Transactions on Automatic Control}, vol.~37, no.~12,
  pp. 1936--1948, 1992.

\bibitem{srikant2014}
R.~Srikant and L.~Ying, \emph{Communication Networks: An Optimization, Control,
  and Stochastic Networks Perspective}.\hskip 1em plus 0.5em minus 0.4em\relax
  Cambridge, UK: Cambridge University Press, 2014.

\end{thebibliography}
\end{document}